\DeclareMathAlphabet{\mathpzc}{OT1}{pzc}{m}{it}
\algnewcommand\algorithmicinput{\textbf{INPUT:}}
\algnewcommand\INPUT{\item[\algorithmicinput]}
\algnewcommand\algorithmicoutput{\textbf{OUTPUT:}}
\algnewcommand\OUTPUT{\item[\algorithmicoutput]}
\newtheorem{theorem}{Theorem}
\newtheorem{lemma}[theorem]{Lemma}
\newtheorem{proposition}[theorem]{Proposition}
\newtheorem{remark}{Remark}
\newtheorem{assumption}{Assumption}
\DeclareMathOperator*{\argmax}{arg\,max}
\newcommand{\dint}{\,\mathrm{d}}
\DeclareFontFamily{U}{mathx}{\hyphenchar\font45}
\DeclareFontShape{U}{mathx}{m}{n}{
	<5> <6> <7> <8> <9> <10>
	<10.95> <12> <14.4> <17.28> <20.74> <24.88>
	mathx10
}{}
\DeclareSymbolFont{mathx}{U}{mathx}{m}{n}
\DeclareMathAccent{\widecheck}{0}{mathx}{"71}
\DeclareMathAccent{\wideparen}{0}{mathx}{"75}
\title{Dynamic and heterogeneous treatment effects with abrupt changes}
\author[1]{Oscar Hernan Madrid Padilla}
\author[2]{Yi Yu}
\affil[1]{Department of Statistics, University California, Los Angeles}
\affil[2]{Department of Statistics, University of Warwick}
\begin{document}

\maketitle

\begin{abstract}
From personalised medicine to targeted advertising, it is an inherent task to provide a sequence of decisions with historical covariates and outcome data.  This requires understanding of both the dynamics and heterogeneity of treatment effects.  In this paper, we are concerned with detecting abrupt changes in the treatment effects in terms of the conditional average treatment effect (CATE) in a sequential fashion.  To be more specific, at each time point, we consider a nonparametric model to allow for maximal flexibility and robustness.  Along the time, we allow for temporal dependence on historical covariates and noise functions.  We provide a  kernel-based change point estimator, which is shown to be consistent in terms of its detection delay, under an average run length control.  Numerical results are provided to support our theoretical findings. 
\end{abstract}

\section{Introduction}\label{sec-introduction}

In a targeted marketing example, one may wish to be able to decide whether or not to preach to a potential customer based on their historical data, including their purchase history on related products, their lifestyle, and their reactions to previous advertisements, among many others.  In order to maximise the net profit, this decision is ideally made to maximise a certain loss function based on the potential outcome of launching advertisements to this potential customer.  Over time, it is unrealistic to assume that a customer reacts to advertisements similarly, even if all the relevant covariates stay unchanged.  It is therefore beneficial to detect when the effects of advertisements abruptly change, so that companies can  deliver new marketing strategies to maintain profits.  In fact, not only in marketing, understanding the dynamics and heterogeneity of treatments' causal effects is vital in many application areas, including medicine studies on the impact  of certain drugs on health indices \citep[e.g.][]{choi2017dynamic, mumford2018circulating} and political sciences studies on the effects of certain government policies \citep[e.g.][]{blackwell2013framework, park2017temporal}, among many others.

In response to such demand, we have witnessed a surge of statistical causal inference methods and theory in recent years, built upon the potential outcome framework \citep{neyman1923applications, rubin1974estimating}, where a treatment effect is defined as the difference between two potential outcomes.  With access to covariates, in addition to the treatment assignment and outcomes, it is natural to model treatment effects as functions of covariates, i.e.~to understand the heterogeneity of the treatment effects.  Statistical efforts on this front include methods based on random forests \citep[e.g.][]{wager2018estimation, green2012modeling}, weighted linear regression \citep[e.g.][]{sun2021estimating}, Lasso-type regularisation \citep[e.g.][]{nie2021quasi} and nearest-neighbour estimation \citep[e.g.][]{gao2020minimax}, to name but a few.

On top of the heterogeneity in terms of the covariates, in a dynamic treatment regime, where for each subject, a sequence of decisions on whether to impose a treatment is to be made.  This can be understood as heterogeneity of treatment effects time-wise.  Statistical research on understanding the dynamics often focuses on the average effects, started from \cite{robins1986new}, which aims to solve a g-formula involving multiple integrals.  Such formulation is widely adopted in more recent work \citep[e.g.][]{murphy2003optimal} and has fostered a number of variants, including penalisation-type estimators \citep[e.g.][]{chakraborty2010inference} and others.  The g-formula, a chain conditional expectation based method, considers the temporal dependence.  The multiple integral, however, hinders analysis with moderately many time points.  This, therefore, motivates us to take a time series perspective instead.

In this paper, we are concerned with data
    \begin{equation}\label{eq-data}
        \{(Y_{t,i}, X_{t,i}, Z_{t,i}), \, i = 1, \ldots, n, \, t \in  N^*\} \subset R \times R^p \times \{0, 1\},
    \end{equation}
    where $Y_{t, i}$ represents the outcome of subject $i$ at time point/stage $t$, $X_{t, i}$ denotes the corresponding covariates and $Z_{t, i}$ is the treatment indicator.  We adopt the potential outcome framework \citep{neyman1923applications, rubin1974estimating}, assuming that
    \begin{equation}\label{eq-model-1}
        Y_{t, i} = Y_{t, i}(1) Z_{t, i} + Y_{t,i}(0)(1 - Z_{t,i}),
    \end{equation}
    with
    \begin{equation}\label{eq-model-2}
        Y_{t, i}(z) = \mu_{t, z}(X_{t, i}) + \epsilon_{t, i}(z), \quad z \in \{0, 1\},
    \end{equation}
    where $\mu_{t, 0}, \mu_{t, 1}:\, \mathcal{X} \to  R$ are mean functions.  We allow for temporal dependence in the multivariate sequence $\{(\epsilon_{t, i}(0), \epsilon_{t,i}(1), X_{t,i}), \, t \in N^*\}$, detailed in Assumption \ref{assume-model-1}.
    
In a dynamic treatment regime, it is crucial to deal with the ever-changing nature.  In this paper, we are especially interested in detecting abrupt changes in treatment effects.  Such a task falls into the area of change point detection, dating back to \cite{wald1945sequential} and recently undergoing a renaissance, becoming the host of a large volume of literature.  At a high level, depending on the availability of the data, change point analysis can be categorised into online/sequential change point analysis - where the data are coming in while one is monitoring if a change point has just occurred, and offline change point analysis - where one detects change points retrospectively.  Motivated by the sequential advertisement example we mentioned at the beginning, we are dealing with an online change point detection problem  in this paper.  In Section \ref{sec-problem-formulation}, we provide formal descriptions of the tasks and relevant literature.

We summarise the contributions of this paper below.  Firstly, to the best of our knowledge, this is the first attempt  of understanding the heterogeneity and dynamics of treatment effects, in the presence of change points.  We estimate the treatment effect in a nonparametric model, which amplifies the model flexibility.  This set of analysis on its own is interesting due to the lack of systematic nonparametric online change point analysis results.  Secondly, to avoid dealing with multiple integrals and to preserve the dependence on the history in a dynamic treatment regime, we allow for temporal dependence across time on both the covariate and noise sequences while controlling the false alarms in detecting change points.  

\subsection*{Notation}  

For any vector $v$, let $\|v\|$ and $\|v\|_{\infty}$ be its $\ell_2$- and entrywise supreme norms.  For any function $f(\cdot)$, $\|f\|_{\infty} = \sup_{x \in \mathrm{Dom}(f)} |f(x)|$.  Let $N$ and $N^*$ denote the collection of all natural numbers and all positive natural numbers, respectively.

\section{Problem formulation}\label{sec-problem-formulation}

There are three key ingredients of our problem: (1) nonparametric and temporal-dependence analysis; (2) causal inference; and (3) an online change point detection framework.  The requirements for each of these are gathered in Assumptions~\ref{assume-model-1}, \ref{assump-causal} and \ref{assump-change-point-scenarios}, respectively.

\begin{assumption}[Model] \label{assume-model-1}
For data specified in \eqref{eq-data}, \eqref{eq-model-1} and \eqref{eq-model-2}, we assume the followng.
    
\textbf{(a.)} (Dependence.) Assume that for any $i \neq j$, $i, j \in \{1, \ldots, n\}$,
        \[
            \{(\epsilon_{t,i}(0),\epsilon_{t,i}(1), X_{t,i}), \, t \in  N^*\} \perp \{(\epsilon_{t,j}(0),\epsilon_{t,j}(1), X_{t,j}), \, t \in N^*\}.
        \]
        For any $i \in \{1, \ldots, n\}$, assume that the multivariate sequence $\{(\epsilon_{t, i}(0), \epsilon_{t,i}(1), X_{t,i}), \, t \in N^*\}$ is strictly stationary and $\alpha$-mixing with coefficients $\{\alpha_m\}_{m \in  N^*}$ satisfying $\alpha_m \leq \exp(-C_{\alpha} m^{\gamma_{\alpha}})$, where $C_{\alpha}, \gamma_{\alpha} > 0$ are absolute constants.
        
\textbf{(b.)} (Covariates.)  Assume that the support of the covariates $\mathcal{X} \subset  R^p$ is a compact set.  In addition, assume that $\{X_{t, i}\}_{t \in N^*, i = 1, \ldots, n}$ are identically distributed with marginal density function $g(\cdot)$ satisfying that
	\[
	    c_{g, 1} \leq \inf_{x \in \mathcal{X}} g(x) \leq \sup_{x \in \mathcal{X}} g(x) \leq c_{g, 2},
	\]
	where $c_{g, 1}, c_{g, 2} > 0$ are absolute constants.  Assume that $g(\cdot)$ is Lipschitz continuous with Lipschitz constant $C_{\mathrm{Lip}} > 0$.
	
\textbf{(c.)} (Time varying conditional average treatment effects.)  Let the conditional average treatment effect (CATE) at time $t$, conditional on $X_t = x \in \mathcal{X}$, be
    \[
        \tau_t(x) =  E\{Y_{t, 1}(1) - Y_{t, 1}(0) | X_{t, 1}  = x\}.
    \]
    For any $t \in  N^*$, assume that $\tau_t(\cdot)$ is Lipschitz continuous with Lipschitz constant $C_{\mathrm{Lip}} > 0$.
	
\textbf{(d.)} (Error functions.)  Assume that $E\{\epsilon_{t, i}(z)| x\}$ is a mean zero sub-Gaussian random variable, with sub-Gaussian parameter $\sigma$, for any $x \in \mathcal{X}$, $t \in  N^*$, $i \in \{1, \ldots, n\}$ and $z \in \{0, 1\}$.  
\end{assumption}

To incorporate temporal dependence, we assume the multivariate sequence $\{(\epsilon_{t, i}(0), \epsilon_{t,i}(1), X_{t,i})$, $t \in  N^*\}$ to be $\alpha$-mixing, for each subject $i \in \{1, \ldots, n\}$.  We refer the readers to the survey \cite{bradley2005basic} for detailed definitions and discussions.  In particular, we require the dependence decay rate $\alpha_m$ to be exponential.  In modern time series literature, different forms of temporal dependence are imposed, including different mixing conditions and functional dependence \citep{wu2005nonlinear}.  To our best of knowledge, theoretical guarantees for non-univariate data are all developed based on temporal dependence with exponential decay \citep[e.g.][]{wong2020lasso, dedecker2004coupling, merlevede2011bernstein}, which corresponds to a short range dependence. 

For model robustness and flexibility, we consider a nonparametric framework.  As for the distribution of the covariates, detailed in Assumption \ref{assume-model-1}\textbf{(b.)}, we require that the density exists with standard regularity conditions.  For the mean functions, as a direct consequence of $\mathcal{X}$ being compact, the mean functions $\mu_{t, z}(\cdot)$ are bounded.  Since we are interested in understanding the treatment effect, we do not impose any further conditions on the mean functions \emph{per se}.  We, instead, collect all the relevant conditions in Assumption \ref{assume-model-1}\textbf{(c.)}~on CATE, which is a powerful tool in learning the heterogeneity of treatment effects \citep[e.g.][]{abrevaya2015estimating, jacob2021cate}.  We assume the CATE functions at every time point to be Lipschitz.  This allows the individual mean functions to be of any form provided that their difference functions are manageable. To illustrate this, we create rather rough mean functions but with smooth difference $\tau(\cdot)$ (see Section \ref{sec:ad_fig} for details).  We depict the estimators based on estimating two mean functions (Two-K) and estimating $\tau(\cdot)$ (One-K) directly in Fig~\ref{fig4}, where we can see the advantage of estimating the CATE function directly versus estimating individually the mean functions for both treatment and control groups and then taking their difference.

\begin{figure}[ht!]
	\begin{center}
		\includegraphics[width = 0.4\textwidth]{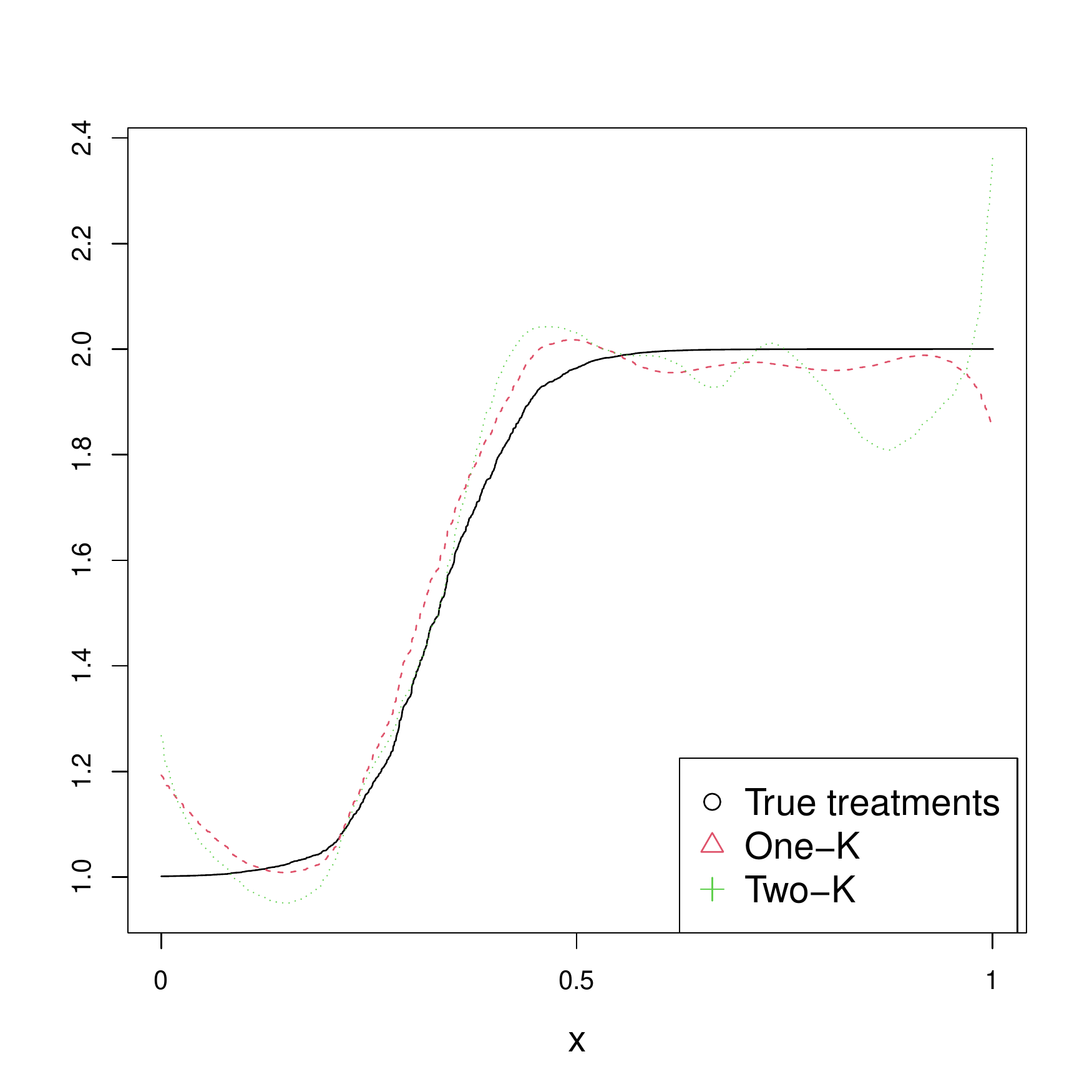} 
		\caption{\label{fig4} Different methods estimating $\tau(\cdot)$.}
	\end{center}
\end{figure}

The conditions on noise functions are considered in Assumption \ref{assume-model-1}\textbf{(d.)}, where we allow the additive noise to be functions of the covariates, provided the conditional means are zero.  Such dependence allows for heteroscedasticity.

To ensure the causal inference problem we are interested in is a valid question, we shall make sure the quantity of interest is identifiable.  We therefore impose the following three standard assumptions.  Assumption \ref{assump-causal}\textbf{(a.)}, \textbf{(b.)}~and \textbf{(c.)}~are often referred to as Neyman--Rubin potential outcome model, unconfoundedness and overlap.  These three conditions together identify the CATE in observational studies \citep[e.g.][]{imbens2015causal}.
\begin{assumption}[Causal inference]\label{assump-causal}
For any $z \in \{0, 1\}$, $t \in N^*$ and $i \in \{1, \ldots, n\}$, assume that: \textbf{(a.)} $Y_{t, i} = Y_{t, i}(z)$, if $Z_{t, i} = z$; \textbf{(b.)} $\{Y_{t, i}(0), Y_{t, i}(1)\} \perp Z_{t, i} | X_{t, i}$; \textbf{(c.)} there exists an absolute constant $c_{\mathrm{prop}} > 0$ such that
    \[
        \inf_{x \in \mathcal{X}}\min\left\{\pi(x), \, 1 - \pi(x)\right\} \geq c_{\mathrm{prop}},
    \]
    where $\pi(x) = \mathbb{P}\{Z_{t, i} = 1 | X_{t, i} = x\}$ is called the propensity score.
\end{assumption}

For simplicity, we assume in Assumption \ref{assump-causal}\textbf{(c.)}~that the propensity score is invariant of time.  When the true propensity scores are known, e.g.~in a randomised design case, the rest of this paper can be adapted to varying propensity scores straightforwardly.   

Our main task is to detect change points in the CATE functions.  The change point scenarios are detailed below.

\begin{assumption}[Change point scenarios]\label{assump-change-point-scenarios}
\

\textbf{(a.)~No change point scenario.} Assume that $\tau_1 = \tau_2 = \cdots$. 

\textbf{(b.)~One change point scenario.} Assume that there exists $\Delta \in N^*$ such that $\tau_1 = \cdots = \tau_{\Delta} \neq \tau_{\Delta + 1} = \cdots$.  Let the jump size be $\kappa = \|\tau_{\Delta} - \tau_{\Delta+1}\|_{\infty}$.
\end{assumption}

Assumption \ref{assump-change-point-scenarios}\textbf{(a.)}~formalises the no change point scenario.  Assumption \ref{assump-change-point-scenarios}\textbf{(b.)}~specifies that when a change point occurs, we characterise the change in terms of the function supreme norm of the difference between two different CATE functions.  We let $\Delta$ denote the pre-change sample size.  

In an online change point detection problem, the goal is twofold, to  detect a change point as soon as it occurs, while controlling false alarms.  To be specific, for a pre-specified $\Gamma > 0$, if there is no change point, one wishes to show the change point estimator $\widehat{\Delta}$ satisfies that
\[
    E_{\infty}\{\widehat{\Delta}\} \geq \Gamma,
\]
where the subscript $\infty$ of $E$ denotes the no change point scenario, i.e.~$\Delta = \infty$.  The quantity $E_{\infty}\{\widehat{\Delta}\}$ is referred to as the average run length (ARL), a lower bound on which is a popular method in controlling false alarms.  This guarantees that, when there is no change point, the expectation of the false alarm location is at least $\Gamma$. 

If there exists a change point $\Delta < \infty$, then one wishes to show that, with probability tending to one as the pre-change sample size $\Delta$ diverges, it holds that 
\[
    \widehat{\Delta} < \Delta + \mathrm{delay},
\]
where delay denotes the detection delay.  We call a change point estimator is consistent if $\mathrm{delay}/\Delta \to 0$, as $\Delta$ grows unbounded.  Recent literature guaranteeing consistent detection delay under false alarm controls includes \cite{lai1995sequential}, \cite{yu2020note}, \cite{berrett2021locally}, \cite{keshavarz2020sequential} among many others.  

\section{Online CATE change point detector}\label{sec-algorithm}

Due to Assumption \ref{assump-causal}\textbf{(b.)}~and \textbf{(c.)}, as demonstrated in Lemma \ref{lem-justifying-estimator}, it holds that 
    \begin{equation}\label{eq-motivation}
        E\left\{Y_{t, i} \left(\frac{Z_{t, i}}{\pi(x)} - \frac{1 - Z_{t, i}}{1 - \pi(x)}\right) \Big| X_{t, i} = x\right\} = \tau_t(x), \quad t \in  N^*.
    \end{equation}
    We therefore consider a change point estimator $\widehat{\Delta}$, based on a kernel regression estimator, which is a function of $Y_{t, i} [Z_{t, i}/\pi(x) - \{1 - Z_{t, i}\}/\{1 - \pi(x)\}]$.  The detailed algorithm is collected in Algorithm \ref{alg-main}.  Note that we directly form estimators of the CATE function, rather than estimating the two mean functions on the treated and controlled groups, separately.  This method has two benefits: (a) since one less function is estimated, fewer tuning parameters are required in forming kernel regression estimators; and (b) the mean functions are allowed to be arbitrary, provided their difference function is well behaved.
    
\begin{algorithm}[ht]
    \begin{algorithmic}
        \INPUT $\{(Y_{t, i}, X_{t, i}, Z_{t, i})\}_{t \in  N^*, i = 1, \ldots, n}$, $w$, $h$, $\widehat{\pi}(\cdot)$, $\varepsilon$
        \State $t \leftarrow 2w$
        \State $\mathrm{FLAG} \leftarrow 0$
        \While{$\mathrm{FLAG} = 0$}
            \State $t \leftarrow t + 1$
            \State $\widehat{\tau}_{t-2w, t-w}(\cdot) \leftarrow \frac{\sum_{l = t-2w+1}^{t-w} \sum_{i = 1}^n \left\{Y_{l, i} \left(\frac{Z_{l, i}}{\widehat{\pi}(X_{l, i})} - \frac{1 - Z_{l, i}}{1 - \widehat{\pi}(X_{l, i})}\right)\right\} \mathpzc{k}\left(\frac{X_{l, i} - \cdot}{h}\right)}{\sum_{l = t-2w+1}^{t-w} \sum_{i = 1}^n \mathpzc{k}\left(\frac{X_{l, i} - \cdot}{h}\right)}$
            \State $\widehat{\tau}_{t-w, t}(\cdot) \leftarrow \frac{\sum_{l = t-w+1}^t \sum_{i = 1}^n \left\{Y_{l, i} \left(\frac{Z_{l, i}}{\widehat{\pi}(X_{l, i})} - \frac{1 - Z_{l, i}}{1 - \widehat{\pi}(X_{l, i})}\right)\right\} \mathpzc{k}\left(\frac{X_{l, i} - \cdot}{h}\right)}{\sum_{l =t-w+1}^t \sum_{i = 1}^n \mathpzc{k}\left(\frac{X_{l, i} - \cdot}{h}\right)}$
            \State $\mathrm{FLAG} \leftarrow 1\left\{|\widehat{\tau}_{t-2w, t-w}(X_{r, i}) - \widehat{\tau}_{t-w, t}(X_{r, i})| \geq \varepsilon \right\}$
        \EndWhile
        \OUTPUT $t$
        \caption{Online CATE change point detector} \label{alg-main}
    \end{algorithmic}
\end{algorithm}    

\begin{algorithm}[ht]
	\begin{algorithmic}
	    \INPUT $\{(X_{t, i}, Z_{t, i})\}_{t = 1, i = 1}^{T, n}$,  $x$.
	    \State $\widehat{\beta} \leftarrow \argmax_{\beta \in R^p} \left[\sum_{t = 1}^T \sum_{i = 1}^n  Z_{t, i} X^{\top}_{t, i} \beta - \log\left\{1 + \exp \left(X^{\top}_{t, i} \beta\right)\right\}\right]$
		\State $\widehat{\pi}(x) \leftarrow \frac{\exp\left(x^{\top} \widehat{\beta}\right)}{1 + \exp\left(x^{\top} \widehat{\beta}\right)}$
		\OUTPUT $\widehat{\pi}(x)$
		\caption{Estimating the propensity score.} \label{alg-prop}
	\end{algorithmic}
\end{algorithm}

In Algorithm \ref{alg-main}, we search a potential change point using a sliding window argument, based on a propensity score estimator $\widehat{\pi}(\cdot)$.  In Algorithm \ref{alg-prop}, for completeness, we provide an estimator obtained from a logistic regression.  This is previously used in  \cite{ye2021non}.  For Algorithm \ref{alg-main} \emph{per se}, any propensity score estimator $\widehat{\pi}(\cdot)$ can be used as an input.  In practice, when the data are acquired from experiments, the true propensity scores are known and can be used as the input here.  For the theoretical guarantee of Algorithm \ref{alg-main}, we will detail the requirements of $\widehat{\pi}(\cdot)$ in Theorem \ref{thm-main}.  As for the computational  complexity of Algorithm \ref{alg-main}, denoting the computational cost of computing $\hat{\pi}$ as $O(C_{\pi})$, the total cost per iteration of   Algorithm \ref{alg-main} is $O(w n d +  C_{\pi})$.

To estimate CATE functions $\tau(\cdot)$ in a nonparametric model, we construct  Nadaraya--Watson estimators \citep[e.g.~Definition 5.39 in][]{wasserman2006all} $\widehat{\tau}_{t_1, t_2}(\cdot)$ following \eqref{eq-motivation}.  To detect changes,  after acquiring a new data point at time $t$, we compare $\widehat{\tau}_{t-2w, t-w}(\cdot)$ and $\widehat{\tau}_{t-w, t}(\cdot)$.  In Assumption \ref{assump-change-point-scenarios}\textbf{(b.)}, we measure the jump using the supreme norm, which guides us to calculate the sample version of the supreme norm of $\widehat{\tau}_{t-2w, t-w}(\cdot) - \widehat{\tau}_{t-w, t}(\cdot)$, given the available data.  

Different types of windows are commonly used in change point detection literature \citep[e.g.][]{fryzlewicz2014wild, kovacs2020seeded, eichinger2018mosum, li2015m, aminikhanghahi2017survey}.  In Algorithm \ref{alg-main}, we adopt a sliding window routine with a fixed window width, denoted by $w$.  In view of different windows deployed in the literature, window widths are chosen in various ways, including all possible intervals \citep[e.g.][]{yu2020note}, random intervals \citep[e.g.][]{fryzlewicz2014wild}, a dyadic grid of intervals \citep[e.g.][]{kovacs2020seeded}, and others.  We would like to highlight that not all the aforementioned methods have been adopted in an online change point detection setup.  Using a fixed window width improves the computational efficiency, but undoubtedly the performance of the change point estimator depends on the choice of the window width.  We leave the theoretical and practical discussions on the tuning parameters to Sections~\ref{sec-theory} and \ref{sec-numerical}, respectively.

\begin{remark}
Another popular sliding window technique used in online change point detection \citep[see e.g.,][]{aminikhanghahi2017survey,madrid2019sequential} is as follows.  For a pre-specified window width $W$, when acquiring a new data point at time $t$, one checks every $s \in \{t-W + 1, \ldots, t-1\}$ for potential change points, by evaluating the difference between estimators built on $\{t-W, \ldots, s\}$ and $\{s + 1, \ldots, t\}$.  This strategy needs modifications for nonparametric estimation problems.  To guarantee the performance of kernel estimators, one can only evaluate $s \in \{t- W + w, \ldots, t-w\}$, with a further tuning parameter $w$.  Compared to the fixed-width sliding window used in Algorithm \ref{alg-main}, this alternative method involves more tuning parameters, has a higher computational cost, but has the same bottleneck that is the detection delay being at least $w$.  Based on this discussion, we stick to the fixed-width window in this paper.
\end{remark}

\section{Consistent change point detection under ARL control}\label{sec-theory}

The Online CATE change point detector (Algorithm \ref{alg-main}) is constructed based on Nadaraya--Watson estimators.  Before presenting the theoretical guarantees of Algorithm \ref{alg-main}, Assumption \ref{assump-kernel} collects assumptions on the kernel function.

\begin{assumption}[The kernel function] \label{assump-kernel}
Assume that there exists an absolute constant $C_{\mathpzc{k}} > 0$ such that 
    \[
        \max \left\{\sup_{u \in R^d} |\mathpzc{k}(u)|, \, \int_{R^d} |\mathpzc{k}(u)|\dint u, \, \int_{R^d} |\mathpzc{k}(u)| \|u\| \dint u \right\} \leq C_{\mathpzc{k}}.
    \]

Assume that there exists an absolute constant $L > 0$ such that at least one of the following holds.  \textbf{(a.)} The kernel function $\mathpzc{k}(\cdot)$ is Lipschitz continuous with Lipschitz constant $C_{\mathrm{Lip}} > 0$.  For any $u$ satisfying $\|u\|_{\infty} > L$, $\mathpzc{k}(u) = 0$.  \textbf{(b.)} The kernel function $\mathpzc{k}(\cdot)$ is continuous with its derivative satisfying $\|\mathpzc{k}'\|_{\infty} \leq C_{\mathrm{Lip}}$.  There exists an absolute constant $v > 1$ such that for any $u$ with $\|u\|_{\infty} > L$, $|\mathpzc{k}'(u)| \leq C_{\mathrm{Lip}} \|u\|_{\infty}^{-v}$.
\end{assumption}

Assumption \ref{assump-kernel} imposes boundedness and smoothness conditions on the kernel functions.  Different smoothness and tail conditions have been imposed on kernel functions in the literature \citep[see e.g.][]{gine1999laws, gine2001consistency, sriperumbudur2012consistency, hansen2008uniform}.  Following the same conditions as imposed in \cite{hansen2008uniform}, Assumption \ref{assump-kernel}\textbf{(a.)} and \textbf{(b.)} consider two different scenarios, covering a wide range of kernel functions.  

\begin{assumption}[Signal-to-noise ratio condition] \label{assump-snr}
For $\Gamma > 0$, assume that
\begin{equation}\label{eq-snr-cond}
    \kappa^{d+2} \gtrsim \frac{\sigma^2}{n} \left\{\frac{\log(\Gamma \vee \Delta)}{\Delta} \vee \frac{\log^{2/\gamma_1}(\Gamma \vee \Delta)}{\Delta^2}\right\} a_{\Delta},
\end{equation}
where $\gamma_1 = 2\gamma_{\alpha}/(2 + \gamma_{\alpha})$ and $a_{\Delta} > 0$ is any arbitrarily diverging sequence satisfying $a_{\Delta} \to \infty$, as $\Delta \to \infty$. 
\end{assumption}

Assumption \ref{assump-snr} details the signal-to-noise ratio condition in the presence of a change point, i.e.~under Assumption \ref{assump-change-point-scenarios}\textbf{(b.)}.  Recall that $\gamma_{\alpha}$ encodes the dependence, which enters the condition via $\gamma_1$.  The parameter $\gamma_1 < 2$ and $\gamma_1 \to 2$ when $\gamma_{\alpha} \to \infty$, which degenerates the independent case.  The right-hand side of \eqref{eq-snr-cond} consists of two terms, where the first term corresponds with the optimal signal-to-noise ratio condition when the data are assumed to be independent \citep{padilla2021optimal}.  Note that, when $\gamma_{\alpha} \to \infty$, i.e.~the data become independent, then we recover the optimal condition.  It is interesting to see, as long as $\Delta$ is large enough compared to $\log(\Gamma)$, it is always the first term dominates the right-hand side of \eqref{eq-snr-cond}.  This coincides with the short-range dependence assumption implied by the exponentially-decay $\alpha$-mixing condition.

We would like to mark that \cite{padilla2021optimal} deals with an offline change point detection case, while ours focuses on an online change point case.  Generally speaking, the detection delay quantity in an online problem can be regarded as the counterpart of the change point estimation error in an offline problem, but online is a more challenging case than its offline counterpart \citep[e.g.][]{yu2020review}.

\begin{theorem}\label{thm-main}
Let the data be 
    \[
        \{(Y_{t,i}, X_{t,i}, Z_{t,i}), \, i = 1, \ldots, n, \, t \in N^*\} \subset  R \times  R^p \times \{0, 1\},
    \]
    where $Y_{t, i}, X_{t, i}$ and $Z_{t, i}$ are the outcome, covariates and treatment indicator of subject $i$ at time point $t$.  Assume Assumptions~\ref{assume-model-1} and \ref{assump-causal} hold.  For any $\Gamma > 0$, let $\widehat{\Delta}$ be the output of Algorithm \ref{alg-main} with the kernel function $\mathpzc{k}(\cdot)$ satisfying Assumption \ref{assump-kernel}, the propensity score $\widehat{\pi}(\cdot)$ independent of the data.  The following $C_h, C_{\varepsilon}, c_1, c_h > 0$ are all absolute constants.
    
\medskip
\noindent \textbf{Case 1. (No change point scenario)}  If Assumption \ref{assump-change-point-scenarios}\textbf{(a.)}~holds, then for any window width $w > 0$, the kernel bandwidth
    \[
        h = C_h \left[\left\{\frac{\sigma^2 \log^{2/\gamma_1} (\Gamma w)}{nw^2}\right\}^{1/(d+2)} \vee \left\{\frac{\sigma^2 \log(\Gamma w)}{nw}\right\}^{1/(d+2)}\right]
    \]
    and the threshold 
    \[
        \varepsilon = C_{\varepsilon} \left[\left\{\frac{\sigma^2 \log^{2/\gamma_1} (\Gamma w)}{nw^2}\right\}^{1/(d+2)} \vee \left(\frac{\sigma^2 \log(\Gamma w)}{nw}\right)^{1/(d+2)} \vee \|\widehat{\pi} - \pi\|_{\infty}\right], 
    \]
    it holds that 
    \[
        E_{\infty} (\widehat{\Delta}) \geq \Gamma.
    \]
    
\medskip
\noindent \textbf{Case 2. (One change point scenario)}  If Assumption \ref{assump-change-point-scenarios}\textbf{(b.)}~holds and in addition assuming that Assumption \ref{assump-snr} holds, then for
    \begin{itemize}
        \item window width satisfying
        \[
            w \geq c_1 \left[\frac{\sigma^2 \log(\Gamma \vee \Delta)}{n \kappa^{d+2}} \vee \left\{\frac{\sigma^2 \log^{2/\gamma_1}(\Gamma \vee \Delta)}{n \kappa^{d+2}}\right\}^{1/2}\right],
        \]
        \item kernel bandwidth satisfying 
        \[
            C_h \left[\left\{\frac{\sigma^2 \log^{2/\gamma_1} (\Gamma \Delta w)}{nw^2}\right\}^{1/(d+2)} \vee \left\{\frac{\sigma^2 \log(\Gamma \Delta w)}{nw}\right\}^{1/(d+2)}\right] \leq h \leq c_h \kappa,
        \]
        \item the threshold 
        \[
            \varepsilon = C_{\varepsilon} \left(h \vee \|\widehat{\pi} - \pi\|_{\infty}\right), 
        \]
    \end{itemize}
    it holds that with an absolute constant $c > 0$,
        \[
            \mathbb{P}\left(\widehat{\Delta} \leq \Delta + w\right) \geq 1 - \Delta^{-c}.
        \]
\end{theorem}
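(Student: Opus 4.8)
The plan is to reduce both cases to a uniform control of the deviation between the Nadaraya--Watson estimator $\widehat\tau_{t_1,t_2}(\cdot)$ and the population CATE, and then to exploit the sliding-window structure. The core object is, for a window $[t_1+1,t_2]$ of length $m=t_2-t_1$, the decomposition $\widehat\tau_{t_1,t_2}(x)-\tau(x) = B(x) + V(x) + P(x)$, where $B$ is the kernel smoothing bias (controlled by the Lipschitz condition in Assumption \ref{assume-model-1}\textbf{(c.)} and the kernel moment bound in Assumption \ref{assump-kernel}, giving $\|B\|_\infty \lesssim h$), $V$ is the stochastic term coming from the noise $\epsilon_{l,i}(z)$ and the pseudo-outcomes $Y_{l,i}(Z_{l,i}/\pi - (1-Z_{l,i})/(1-\pi))$, and $P$ is the propensity-misspecification term, bounded by $C\|\widehat\pi-\pi\|_\infty$ using Assumption \ref{assump-causal}\textbf{(c.)} (the overlap lower bound $c_{\mathrm{prop}}$ keeps the plug-in ratios bounded) together with the boundedness of $\mu_{t,z}$ on the compact set $\mathcal X$. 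The first preliminary step is therefore to state and prove a uniform-in-$x$ concentration lemma: with probability at least $1-(\Gamma w \Delta)^{-c'}$, $\sup_{x\in\mathcal X}|V(x)| \lesssim \big(\frac{\sigma^2\log^{2/\gamma_1}(\cdot)}{nm^2}\big)^{1/(d+2)} \vee \big(\frac{\sigma^2\log(\cdot)}{nm}\big)^{1/(d+2)}$, with the log argument chosen as $\Gamma w$ in Case 1 and $\Gamma\Delta w$ in Case 2. This is exactly where the $\alpha$-mixing enters: I would invoke a Bernstein-type inequality for $\alpha$-mixing sequences with exponential decay (e.g. Merlev\`ede--Peligrad--Rio, as cited), applied after a standard covering/chaining argument over $\mathcal X$ that uses the kernel smoothness in Assumption \ref{assump-kernel}\textbf{(a.)} or \textbf{(b.)} to pass from a finite net to the sup; the two terms in the rate are the ``sub-Gaussian'' and ``sub-exponential'' regimes of the Bernstein bound, and the exponent $2/\gamma_1$ is the price of the mixing. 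The denominator $\sum_{l,i}\mathpzc k((X_{l,i}-x)/h)$ concentrates around a positive multiple of $nm h^d g(x)$ by the same tools and the density lower bound $c_{g,1}$, so it can be treated as deterministic up to constants.

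For Case 1 (no change point, $\tau$ constant), at every step $t$ both windows $[t-2w,t-w]$ and $[t-w,t]$ estimate the same $\tau$, so on the good event $|\widehat\tau_{t-2w,t-w}(x)-\widehat\tau_{t-w,t}(x)| \le 2\|B\|_\infty + 2\sup|V| + 2\|P\|_\infty$, which is $\le \varepsilon$ by the choice of $h$ (matching the bias and variance rates) and of $\varepsilon = C_\varepsilon(\text{rate}\vee\|\widehat\pi-\pi\|_\infty)$ with $C_\varepsilon$ large enough to absorb the constants. Hence a false alarm at step $t$ requires the bad event, whose probability is at most $(\Gamma w)^{-c'}$ by the concentration lemma (with log argument $\Gamma w$ so that a union bound over the $\approx\Gamma$ steps before time $\Gamma$ still leaves probability $o(1)$, in fact $\lesssim \Gamma^{-(c'-1)}w^{-c'}$). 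A standard argument then gives $E_\infty(\widehat\Delta) \ge \Gamma$: write $E_\infty(\widehat\Delta) \ge \Gamma\,\mathbb P_\infty(\widehat\Delta \ge \Gamma)$ (or sum the tail) and bound $\mathbb P_\infty(\widehat\Delta < \Gamma)$ by the union-bounded false-alarm probability, which is below a constant; choosing $C_h, C_\varepsilon$ appropriately makes the bound $E_\infty(\widehat\Delta)\ge\Gamma$ come out cleanly. (One should be slightly careful that the window indices in the first few steps $t\le 3w$ are handled by the FLAG initialisation, but this only shifts things by $O(w)$, absorbed into constants, and does not affect an ARL lower bound of $\Gamma$.)

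For Case 2 (one change point at $\Delta$), I would argue that at the specific step $t=\Delta+w$ the detector fires. At that time the earlier window $[t-2w,t-w]=[\Delta-w,\Delta]$ lies entirely in the pre-change regime and estimates $\tau_\Delta$, while the later window $[t-w,t]=[\Delta,\Delta+w]$ lies entirely in the post-change regime and estimates $\tau_{\Delta+1}$; since $\|\tau_\Delta-\tau_{\Delta+1}\|_\infty=\kappa$, pick $x^\star$ nearly attaining the sup. Then on the good event $|\widehat\tau_{\Delta-w,\Delta}(x^\star)-\widehat\tau_{\Delta,\Delta+w}(x^\star)| \ge \kappa - 2\|B\|_\infty - 2\sup|V| - 2\|P\|_\infty$. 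The condition $h \le c_h\kappa$ makes the bias at most a small fraction of $\kappa$; the window lower bound $w \ge c_1[\ldots]$ is precisely what forces the variance term $(\sigma^2\log/(nw))^{1/(d+2)}$ etc. below another small fraction of $\kappa$ (this is the role of the SNR condition in Assumption \ref{assump-snr}, which guarantees the admissible range for $w$ is nonempty and compatible with $h\le c_h\kappa$); and $\varepsilon=C_\varepsilon(h\vee\|\widehat\pi-\pi\|_\infty)$ is chosen of the same small-fraction-of-$\kappa$ order. Hence the difference exceeds $\varepsilon$ and FLAG is set, giving $\widehat\Delta \le \Delta+w$ — unless either (i) a false alarm already occurred at some earlier step $t<\Delta+w$, handled exactly as in Case 1 with log argument $\Gamma\vee\Delta$ and a union bound over the $O(\Delta)$ pre-change steps, contributing $\lesssim \Delta^{-c}$; or (ii) the good event fails at step $\Delta+w$, also $\lesssim \Delta^{-c}$. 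Combining, $\mathbb P(\widehat\Delta\le\Delta+w)\ge 1-\Delta^{-c}$.

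The main obstacle is the uniform concentration lemma under $\alpha$-mixing: one must get a sup-norm (not pointwise) bound on the stochastic part of a ratio estimator, with the correct interplay between the bandwidth $h$, the effective sample size $nmh^d$, and the mixing-induced sub-exponential tail, so that the resulting rate has the exact two-term form appearing in the statement of $h$, $w$ and $\varepsilon$. The ratio structure (random denominator) and the need to make the chaining/covering argument interact correctly with the two kernel-smoothness alternatives in Assumption \ref{assump-kernel} are the technical crux; everything after that — the bias bound, the propensity-error bound, and the two ARL/delay conclusions — is a matter of bookkeeping with the constants $C_h,C_\varepsilon,c_1,c_h$.
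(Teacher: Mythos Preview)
Your overall architecture matches the paper's: a uniform deviation bound for $\widehat\tau_{t-w,t}$ built from (i) kernel bias $\lesssim h$, (ii) a Bernstein-for-mixing stochastic term handled by covering $\mathcal X$ and the kernel smoothness of Assumption~\ref{assump-kernel}, and (iii) a propensity-plug-in error $\lesssim\|\widehat\pi-\pi\|_\infty$; then an ARL bound via $E_\infty(\widehat\Delta)\ge Q\,\mathbb P(\widehat\Delta>Q)$, and a delay bound by checking the detector fires at $t=\Delta+w$. The paper packages (i)--(iii) into Proposition~\ref{prop-mathcal-E-event} (the event $\mathcal E_u$) via Lemmas~\ref{lem1}--\ref{lem-epis-tau-diff-bound}, which is exactly your ``uniform concentration lemma''.

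There is, however, one genuine gap in your Case~2 argument. The FLAG in Algorithm~\ref{alg-main} is computed only at \emph{observed covariate values} $X_{r,i}\in\widehat{\mathcal X}$, not at an arbitrary $x^\star\in\mathcal X$. So ``pick $x^\star$ nearly attaining the sup'' is not enough: you must show that $\max_{x\in\widehat{\mathcal X}}|\tau_\Delta(x)-\tau_{\Delta+1}(x)|$ is within $\xi_w/2$ of $\kappa=\|\tau_\Delta-\tau_{\Delta+1}\|_\infty$. The paper does this via a separate event $\mathcal F$ (Proposition~\ref{prop-mathcal-F}): by Lipschitzness of $\tau_\Delta-\tau_{\Delta+1}$, failure of $\mathcal F$ forces a ball of radius $\asymp\xi_w$ around the true maximiser to contain no sample point, and the density lower bound $c_{g,1}$ plus independence across subjects makes this probability $(1-c_d\xi_w^d)^n$. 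Without this step the chain of inequalities in \eqref{eq-proof-thm-main-1} does not go through, and your lower bound $\kappa-2\|B\|_\infty-\cdots$ is a bound on a quantity the algorithm never computes.

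A minor point: your item (i) in Case~2 (``unless a false alarm already occurred at some earlier step'') is unnecessary. An early alarm gives $\widehat\Delta<\Delta+w$, which already satisfies the target event $\{\widehat\Delta\le\Delta+w\}$; the paper accordingly only needs the good event $\mathcal E_{\Delta-w}\cap\mathcal F$ at the single time $t=\Delta+w$ and does not union-bound over pre-change steps.
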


Theorem \ref{thm-main} demonstrates the theoretical properties of the change point estimator.  As we have mentioned in Section \ref{sec-algorithm}, we allow for any propensity score estimator $\widehat{\pi}(\cdot)$, as long as it is independent of the data.  To allow for this generality, the threshold $\varepsilon$ is a function of the estimation error $\|\widehat{\pi} - \pi\|_{\infty}$.  In the rest of the discussion, for simplicity, we consider the case that 
\[
    \|\widehat{\pi} - \pi\|_{\infty} \lesssim \left\{\frac{\sigma^2 \log^{2/\gamma_1} (\Gamma w \Delta)}{nw^2}\right\}^{1/(d+2)} \vee \left\{\frac{\sigma^2 \log(\Gamma w \Delta)}{nw}\right\}^{1/(d+2)}.
\]

To lower bound the ARL, we allow for any window width $w > 0$, with properly chosen kernel bandwidth and threshold, both of which are decreasing functions of $w$.  Since the ARL considers the case when there is no change point, it is worth comparing the bandwidth choice with the standard kernel estimation literature.  We see that when $\gamma_1 = 2$, which corresponds to the independent case, the bandwidth corresponds to the optimal bandwidth in kernel estimation literature \citep[e.g.][]{Tsybakov2009}.

In the presence of a change point, we are to consider the detection delay under the signal-to-noise ratio condition Assumption \ref{assump-snr}.  Different from the no change point scenario in \textbf{Case 1}, we require a lower bound on the window width $w$.  This condition guarantees that the condition on the kernel bandwidth is not an empty set.  With the threshold reflecting the bandwidth choice, we show that the detection delay is upper bounded by the window width $w$.  

In particular, in \textbf{Case 2.}, if we choose the window width to be
\begin{equation}\label{eq-optimal-delay}
    w \asymp \frac{\sigma^2 \log(\Gamma \vee \Delta)}{n \kappa^{d+2}} \vee \left\{\frac{\sigma^2 \log^{2/\gamma_1}(\Gamma \vee \Delta)}{n \kappa^{d+2}}\right\}^{1/2},
\end{equation}
then the kernel bandwidth is consequently $h \asymp \kappa$.  In this case, our detection delay is not only consistent but also optimal.  To be specific, due to Assumption \ref{assump-snr},
\[
    w/\Delta \asymp \Delta^{-1} \left[\frac{\sigma^2 \log(\Gamma \vee \Delta)}{n \kappa^{d+2}} \vee \left\{\frac{\sigma^2 \log^{2/\gamma_1}(\Gamma \vee \Delta)}{n \kappa^{d+2}}\right\}^{1/2}\right] \leq a_n^{-1} \to 0.
\]
Moreover, in the case when $\gamma_1 = 2$, i.e.~when the data are independent, it follows from \cite{padilla2021optimal} that the detection delay \eqref{eq-optimal-delay} is optimal.

\section{Numerical experiments}\label{sec-numerical}

Throughout the simulations, we consider four generative scenarios.  In each scenario, we generate 50 data sets and report the average delay of our proposed method.  In each scenario the data are generated as
\begin{align*}
    & Y_{t,i} =  Z_{t,i} Y_{t,i}(1) +  (1-Z_{t,i}) Y_{t,i}(0), \quad Y_{t,i}(1) = \mu_{0}(X_{t,i}) + \tau_t(X_{t,i}) +  \epsilon_{t,i}(1), \\
    & Y_{t,i}(0) = \mu_{0}(X_{t,i}) + \epsilon_{t,i}(0), \quad \mathbb{P}(Z_{t,i}=1| X_{t,i}) = \pi(X_{t,i}), \\
    & (\epsilon_{1,i}(l), \ldots, \epsilon_{T,i}(l))^{\top} \overset{\mathrm{ind}}{\sim} F, \quad l=0,1, \quad \mbox{and} \quad X_{t,i} \overset{\mathrm{ind}}{\sim}  \mathrm{Unif}([0,1]^d).
\end{align*}
We consider $d \in \{3,6\}$, $T = 100$, $n = 40$,  the change point location $\Delta = 50$ and four different scenarios described below.  For any $x \in [0, 1]^d$, we write $x = (x_1, \ldots, x_d)^{\top}$.

\medskip
\noindent \textbf{Scenario 1.} For $x \in [0, 1]^d$, consider a randomised experiment where $\pi(x) = 0.5$.  Let the mean function be $\mu_{0}(x) = \|x\|^2$ and the treatment effect functions be $\tau_t(x) = x_1\,1\{t > \Delta\}$.  The errors are $\epsilon_{t,i}(l) \overset{\mathrm{ind}}{\sim} \mathcal{N}(0,1)$, for $t \in \{1, \ldots, T\}$, $i \in \{1, \ldots, n\}$ and $l \in \{0, 1\}$.

\medskip
\noindent \textbf{Scenario 2.} For $x \in [0, 1]^d$, let the propensity score be $\pi(x) = 0.25x_1 (1-x_1)^{4}/  \mathrm{B}(2, 5)$, where $\mathrm{B}(2, 5)$ denotes the Beta(2, 5) function.  Let $\mu_{0}(x) = 2x_1-1$ and $\tau_t(x) = (x_1 + x_2/2)\,1\{t > \Delta\}$.   Let $\varepsilon_{s, i}(l) \overset{\mathrm{ind}}{\sim} \mathcal{N}(0, 1)$, $s \in \{1, \ldots, T\}$, $l \in \{0, 1\}$ and $i \in \{1, \ldots, n\}$.  Let the errors be $\epsilon_{s,i}(l) = \varepsilon_{s,i}(l)$, for $s \in \{1, 2, 3\}$ and $l \in \{0, 1\}$.  For $t > 3$, let $\epsilon_{t,i}(l) = \{\varepsilon_{t,i}(l) + \varepsilon_{t-1,i}(l) + \varepsilon_{t-2,i}(l) + \varepsilon_{t-3,i}(l)\}/4$.  This model appeared as Scenario 1 in \cite{wager2018estimation} for a single time point.

\medskip
\noindent \textbf{Scenario 3.} For $x \in [0, 1]^d$, consider a randomised experiment where  $\pi(x) = 0.5$.  Let the mean function be $\mu_{0}(x) = \cos(100/x_1)$ and 
\[
    \tau_t(x) = \begin{cases}
        \left[1 + \frac{1}{1 + \exp\{-20(x_1 - 1/3)\}}\right]	\left[1 + \frac{1}{1 + \exp\{-20(x_2- 1/3)\}}\right]	&  t \leq \Delta, \\
        2\left[1 + \frac{1}{1 + \exp\{-20(x_1 - 1/3)\}}\right]	\left[1 + \frac{1}{1 + \exp\{-20(x_2- 1/3)\}}\right] & t > \Delta.
    \end{cases}
\] 
The errors $\epsilon_{t,i}(l)$ are generated as in \textbf{Scenario 2}.  This model is inspired by Scenario 2 of \cite{wager2018estimation}.

\medskip
\noindent \textbf{Scenario 4.}  Let $\beta = (\beta_1, \ldots, \beta_d)^{\top} \in R^d$ with $\beta_j = 1\{j = 1\} - 1\{j = 2\} + 1\{j = 3\}$.  Define the propensity score as $\pi(x) = \Phi(x^{\top} \beta)$, where $\Phi(\cdot)$ is the cumulative distribution function of $\mathcal{N}(0, 1)$. We also set $\mu_{0}(x) = \cos(300/x_1)$, and $\tau_t(x) = (2x_1 +3x_2) \,1\{t > \Delta\}$, $x\in [0,1]^d$.  Let $\varepsilon_{s, i}(l) \overset{\mathrm{ind}}{\sim} \mathcal{N}(0, 1)$, $s \in \{1, \ldots, T\}$, $l \in \{0, 1\}$ and $i \in \{1, \ldots, n\}$.  Let the errors be $\epsilon_{s,i}(l) = \varepsilon_{s,i}(l)$, for $s \in \{1, 2, 3, 4\}$ and $l \in \{0, 1\}$.  For $t > 4$, let $\epsilon_{t,i}(l) = \{\varepsilon_{t,i}(l) + \varepsilon_{t-1,i}(l) + \varepsilon_{t-2,i}(l) + \varepsilon_{t-3,i}(l) + \varepsilon_{t-4,i}(l)\}/8$.  

\medskip
\noindent \textbf{Choice of tuning parameters.} There are three tuning parameters for implementing our method. The first tuning parameter is the window width $w$. We recommend choosing $w$ as large as tolerable for delay in detecting a change point.  For our experiments we let $w = 3$ ($w=4$ in Section \ref{sec:exp2} in the Appendix).  As for the thresholding parameter $\varepsilon$, we first specify a desired $\Gamma >0$  and then select $\varepsilon$ such that $\mathbb{E}_{\infty}(\widehat{\Delta}) \geq \Gamma$.  Considering $\Gamma \in \{20, 30\}$, we conduct the calibration  by generating data from the model before the change point and calculating the expectation of the first false alarm  based on 100 Monte Carlo simulations.  As for the bandwidth $h$, we also recommend setting it to some constant value since our theory suggests $h \asymp \kappa$. To assess the sensitivity to $h$, in our experiments  we let  $h \in \{4,20\}$.  With regards to the propensity score,  in Scenarios 1 and 3 we use $\hat{\pi}(x) = \sum_{i=1}^{n} \sum_{t=1}^{T}Z_{t,i} /(nT)$,   and in Scenarios  2 and 4 we use the output of Algorithm \ref{alg-prop}.

\medskip
\noindent \textbf{Competitors.}  We are not aware of any existing competing method. To this end, we modify Algorithm \ref{alg-main} to propose a difference-based kernel method (DK).  DK adopts the same strategy as Algorithm \ref{alg-main}, only replacing  $\widehat{\tau}_{t-2w, t-w}(\cdot) $ and  $\widehat{\tau}_{t-w, t}(\cdot)$ with 
\[
    \widehat{\tau}_{t-2w, t-w}(\cdot) = \frac{\sum_{l = t-2w+1}^{t-w} \sum_{i = 1}^n Y_{l, i} 1_{ \left\{Z_{l,i}=1\right\}} \mathpzc{k}\left(\frac{X_{l, i} - \cdot}{h}\right)}{\sum_{l = t-2w+1}^{t-w} \sum_{i = 1}^n 1_{ \left\{Z_{l,i}=1\right\}}  \mathpzc{k}\left(\frac{X_{l, i} - \cdot}{h}\right)}- \frac{\sum_{l = t-2w+1}^{t-w} \sum_{i = 1}^n Y_{l, i} 1_{ \left\{Z_{l,i}=0\right\}} \mathpzc{k}\left(\frac{X_{l, i} - \cdot}{h}\right)}{\sum_{l = t-2w+1}^{t-w} \sum_{i = 1}^n 1_{ \left\{Z_{l,i}=0\right\}}  \mathpzc{k}\left(\frac{X_{l, i} - \cdot}{h}\right)}
\]
and
\[
    \widehat{\tau}_{t-w, t}(\cdot) = \frac{\sum_{l = t-w+1}^t \sum_{i = 1}^n  Y_{l, i}  1_{ \left\{Z_{l,i}=1\right\}}   \mathpzc{k}\left(\frac{X_{l, i} - \cdot}{h}\right)}{\sum_{l =t-w+1}^t \sum_{i = 1}^n 1_{ \left\{Z_{l,i}=1\right\}} \mathpzc{k}\left(\frac{X_{l, i} - \cdot}{h}\right)}-\frac{\sum_{l = t-w+1}^t \sum_{i = 1}^n  Y_{l, i}  1_{ \left\{Z_{l,i}=0\right\}}   \mathpzc{k}\left(\frac{X_{l, i} - \cdot}{h}\right)}{\sum_{l =t-w+1}^t \sum_{i = 1}^n 1_{ \left\{Z_{l,i}=0\right\}} \mathpzc{k}\left(\frac{X_{l, i} - \cdot}{h}\right)},
\]
respectively.  The tuning parameters are chosen as detailed above.  Our proposed method in Algorithm \ref{alg-main} in fact constructs a nonparametric estimator of the CATE function, while the competitor DK constructs two nonparametric estimators of the two mean functions for control and treatment groups, separately, before taking their difference to obtain an estimator of the CATE function.  In Section \ref{sec-problem-formulation} we have emphasised that Algorithm \ref{alg-main} allows for arbitrary mean functions and depends on fewer tuning parameters.  With the comparisons of DK, we solidate this argument.  See Section \ref{sec:ad_fig} for more discussions.

\begin{table}[t!]
	\centering
	\caption{  	\label{tab1} Delay averaging over 50 Monte Carlo simulations for different scenarios  and choices of tuning parameters. }
	\medskip
	\setlength{\tabcolsep}{3.5pt}
	\begin{small}
		\begin{tabular}{rr|cc|cc||cc|cc} 
			\hline
			&           &               \multicolumn{4}{c}{Scenario 1}                &    \multicolumn{4}{c}{Scenario 2}\\ 
			$d$  & $h$      & Delay                                     & Delay                                      & Delay                                                  & Delay                  & Delay                                     & Delay                                      & Delay                                                  & Delay  \\ 
			&                  &          ($\Gamma =20$)     &     ($\Gamma =20$)          &          ($\Gamma =40$)               &($\Gamma =40$) &          ($\Gamma =20$)     &     ($\Gamma =20$)          &          ($\Gamma =40$)               &($\Gamma =40$) \\
			&              &                 Alg 1                 &              DK                                   &        Alg 1                                             &  DK                       &                 Alg 1                          &       DK                                  &                         Alg 1                 &  DK\\
			\hline    
			3    &    20         &8.4 (8.7)                      &             11.7 (14.3)                    & 17.7 (18.6)                                & 38.9 (30.6)                         &               19.8 (20.2)         &          33.3 (30.7)        &              28.5 (23.9)          &34.3 (29.9)\\
			3     &    4          &    21.3 (22.1)          &             22.3 (25.8)                      &     33.7 (26.3)                                     &          33.7 (32.6)           &               14.2 (20.8)       &       17.0 (27.5)               &        26.9 (28.3)  &27.5 (31.1)\\
			6     &    20         &      12.4 (10.1)            &         10.7 (9.6)                          &      16.6 (19.7)                              &   26.2 (30.1)                 &            13.8 (24.1)        &             14.5 (24.6)                     &             26.3(28.5)         &29.8 (31.2)\\
			6      &    4          &    11.3 (8.9)               &        16.3 (23.3)                       &      21.4 (8.9)                          &  29.9 (29.8)                        &              16.8 (26.7)              &        17.2 (26.7)            &       30.1 (30.0)                 & 27.3 (31.8)\\			          			          			          
			\hline
			\hline
			&           &               \multicolumn{4}{c}{Scenario 3}                &    \multicolumn{4}{c}{Scenario 4}\\ 
			$d$  & $h$      & Delay                                     & Delay                                      & Delay                                                  & Delay                  & Delay                                     & Delay                                      & Delay                                                  & Delay  \\ 
			&                  &          ($\Gamma =20$)     &     ($\Gamma =20$)          &          ($\Gamma =40$)               &($\Gamma =40$) &          ($\Gamma =20$)     &     ($\Gamma =20$)          &          ($\Gamma =40$)               &($\Gamma =40$) \\
			&                             &                 Alg 1                 &              DK                                   &        Alg 1                                             &  DK                       &                 Alg 1                          &       DK                                  &                         Alg 1                 &  DK\\
			\hline
			3     &    20         &  15.2 (27.6)                 &     15.5 (27.8)                          &25.3 (32.8)                                        &     25.4 (32.8) &      15.3 (27.5)                   &               18.8 (29.7)                   &            22.9 (30.2)              &25.5 (32.8)\\
			3     &    4          &       11.0 (21.4)                   &    22.1 (31.5)                        &      23.6 (21.4)                              &       38.9 (34.2)     &           22.0 (32.6)              &      23.4 (28.8)                  &            25.4 (32.8)             & 25.4 (32.8)\\
			6     &    20         &       18.9 (28.5)                  &             22.1 (31.5 )         &       25.5 (34.3)                                & 28.8 (33.7)   &                        18.7 (29.8)           &              22.1 (31.5)                &            28.8 (33.7)                       &32.2 (34.2)\\
			6     &    4          &        8.7 (20.6)                    &          15.4 (27.5 )               &    25.5 (32.8)                            &   25.5 (32.8)             &                 15.3 (27.5)                  &            15.4 (27.5)                  &          27.2 (33.2)                  & 28.8 (34.7)\\				   	     
			\hline
			\hline			
		\end{tabular}
	\end{small}
\end{table}

The results in Table \ref{tab1} show that the proposed method perform reasonably well  across all the scenarios considered. In particular, Algorithm \ref{alg-main} clearly  outperforms DK in Scenarios 3 and 4, where the functions $\mu_t(\cdot)$ are not well behaved, making the estimation of  the functions $E(Y_{t,i}(1) |X_{t,i}=x)$ and   $E(Y_{t,i}(0) |X_{t,i}=x)$ challenging. 
We also see in Table \ref{tab1} that when the functions $E(Y_{t,i}(1) |X_{t,i}=x)$ and   $E(Y_{t,i}(0) |X_{t,i}=x)$ are smooth as in Scenarios 1 and 2, the competing method DK and Algorithm \ref{alg-main} perform more similarly, which is  expected since in those cases the DK method is indeed a reasonable choice.

\section{Conclusion}

In this paper, we are concerned with detecting abrupt changes in CATE functions, with temporal dependence.  A sliding window technique, coupled with Nadaraya--Watson estimators, leads us to a change point estimator.  We have shown that such procedure is theoretically consistent and computationally efficient.  

For an online change point detection problem, controlling false alarms is as important as minimising the detection delay.  We considered lower bounding the ARL by $\Gamma$ in this work.  Another popular way is to upper bound the overall Type-I error.  The method and theory developed in this paper can be straightforwardly extended to the overall Type-I error control, provided that the data are assumed to be independent across time.  Considering change point detection in dynamic treatment effects under this more conservative control remains an interesting but challenging problem.

Other research directions in dynamic treatment effects change point detection include considering more general classes of functions for the CATE, such as the class of bounded variation functions such as in \cite{mammen1997locally}.  Another natural extension is to allow for high-dimensional covariates, perhaps employing some type of  $\ell_1$ or $\ell_2$ regularisation.


\newpage
\bibliographystyle{agsm}
\bibliography{ref}

\appendix

\section{Proofs}

\subsection{Additional notation}\label{sec-additional-notation}

Recall that for any $s \in N^*$, $i \in \{1, \ldots, n\}$ and $x \in \mathcal{X}$,
\[
    E\left[Y_{s, i} \left\{\frac{Z_{s, i}}{\pi(x)} - \frac{1 - Z_{s, i}}{1 - \pi(x)}\right\} \Big| X_{s, i} = x\right]= \tau_s(x).
\]
For any $s \in N^*$ and $i \in \{1, \ldots, n\}$, define
\[
    \widetilde{Y}_{s, i} = Y_{s, i}\left\{ \frac{Z_{s, i}}{\pi(X_{s, i})} - \frac{1 - Z_{s, i}}{1 - \pi(X_{s, i})} \right\}  \quad \mbox{and} \quad \widehat{Y}_{s, i} = Y_{s, i}\left\{\frac{Z_{s, i}}{\widehat{\pi}(X_{s, i})} - \frac{1 - Z_{s, i}}{1 - \widehat{\pi}(X_{s, i})} \right\}.
\]

For any integer $t > w$ and $x \in \mathcal{X}$, define the kernel estimators of $\widetilde{Y}_{s, i}$ and $\widehat{Y}_{s, i}$ as
\[
    \widetilde{\Psi}_{t-w, t}(x) = \frac{1}{nw h^p} \sum_{s = t-w+1}^t \sum_{i = 1}^n \widetilde{Y}_{s, i} \mathpzc{k}\left(\frac{X_{s, i} - x}{h}\right)
\]
and 
\[
    \widehat{\Psi}_{t-w, t}(x) = \frac{1}{nw h^p} \sum_{s = t-w+1}^t \sum_{i = 1}^n \widehat{Y}_{s, i} \mathpzc{k}\left(\frac{X_{s, i} - x}{h}\right),
\]
respectively.

For any integer $t > w$ and any $x \in \mathcal{X}$, define 
\[
    \widehat{g}_{t-w, t}(x) = \frac{1}{nwh^d} \sum_{s = t-w+1}^t \sum_{i = 1}^n \mathpzc{k}\left(\frac{X_{s, i} - x}{h}\right) \quad \mbox{and} \quad \tau_{t-w, t}(x) = \frac{1}{w}\sum_{s = t-w+1}^t \tau_s(x).
\]

We defined the event
\[
    \mathcal{G} = \{\|\widehat{\pi} - \pi\|_{\infty} < \xi_w < c_{\mathrm{prop}}/2\}.
\]

\subsection[]{Proof of Theorem \ref{thm-main}}

\begin{proof}[Proof of Theorem \ref{thm-main}]

This proof is conducted in a few steps. 

\medskip
\noindent \textbf{Step 1.}  For $\gamma \in (0, 1/2)$, let 
\begin{equation}\label{eq-Q-cond}
    Q = C\max\{\Gamma/(1- \gamma), \, 2w\},
\end{equation}
where $C > 1$ is an absolute constant.  In this step, we are to show that, for any integer $u \in N$, the event
    \[
        \mathcal{E}_u = \left\{\|\widehat{\tau}_{t-w, t} - \tau_{t-w, t}\|_{\infty} \leq \xi_w/2, \, \forall u+w < t \leq u+Q \mbox{ and } \Delta \notin [u, u+Q]\right\}
    \]
    holds with probability at least $1 - \gamma/2$, where $\xi_w$ is defined in Proposition \ref{prop-mathcal-E-event}.
    
Let 
    \[
        \widehat{\mathcal{X}} = \{X_{r, i}, \, i = 1, \ldots, n, \, r = 1, \ldots, 2w\}.
    \]
    Proposition \ref{prop-mathcal-F} shows that 
    \[
        \mathcal{F} = \bigg\{\sup_{x \in \mathcal{X}} \min_{\hat{x} \in \widehat{\mathcal{X}}} |\{\tau_{\Delta}(x) - \tau_{\Delta+1}(x)\} - \{\tau_{\Delta}(\hat{x}) - \tau_{\Delta+1}(\hat{x})\}| \leq \xi_w/2\bigg\}
    \]
    holds with probability at least $1 - \gamma/2$.  
    
\medskip
\noindent \textbf{Step 2: Average run length control.}  Under Assumption \ref{assump-change-point-scenarios}\textbf{(a.)}, in the event $\mathcal{E}_0 \cap \mathcal{F}$, it holds that 
\begin{align*}
    & \max_{t = 2w+1}^Q \max_{x \in \widehat{\mathcal{X}}}|\widehat{\tau}_{t-2w, t-w}(x) - \widehat{\tau}_{t-w, t}(x)| \\
    \leq & \max_{t = 2w+1}^Q \max_{x \in \widehat{\mathcal{X}}} |\widehat{\tau}_{t-2w, t-w}(x) - \tau_{t-2w, t-w}(x)| + \max_{t = 2w+1}^Q \max_{x \in \widehat{\mathcal{X}}} |\widehat{\tau}_{t-w, t}(x) - \tau_{t-w, t}(x)| \\
    \leq & \max_{t = 2w+1}^Q \|\widehat{\tau}_{t-2w, t-w} - \tau_{t-2w, t-w}\|_{\infty} + \max_{t = 2w+1}^Q \|\widehat{\tau}_{t-w, t} - \tau_{t-w, t}\|_{\infty} \leq \xi_w < \varepsilon,
\end{align*}
where the first inequality is due to the triangular inequality, the third is due to the definition of $\mathcal{E}_0$ and the last is due to the definition of $\varepsilon$.  We therefore have that, for any $t \in (0, Q]$, $\widehat{\Delta} > t$.  It therefore follows that
\begin{align*}
	E_{\infty} (\widehat{\Delta}) & = \int_0^{\infty} \mathbb{P}(\widehat{\Delta} > t) \dint t \geq \int_0^Q \,\mathbb{P}(\widehat{\Delta} > t) \dint t \\
	& \geq Q \inf_{t \in [0, Q]} \mathbb{P}\{\widehat{\Delta} > t\} \geq Q \,\mathbb{P}\{\widehat{\Delta} > Q\} \geq Q \,\mathbb{P}\{\mathcal{E}_0 \cap \mathcal{F}\} \geq Q(1 - \gamma) \geq \Gamma,
\end{align*}
where the last inequality follows from  \eqref{eq-Q-cond}.
    
\medskip    
\noindent \textbf{Step 3: Detection delay control.} In order to control the detection delay, we consider a specific choice of $\gamma = \Delta^{-c}$.  Under Assumption \ref{assump-change-point-scenarios}\textbf{(a.)}, since $\mathcal{X}$ is a compact set, we let $x^* = \sup_{x \in \mathcal{X}} |\tau_{\Delta}(x) - \tau_{\Delta+1}(x)|$.  Consider the event $\mathcal{E}_{\Delta-w} \cap \mathcal{F}$, it holds that
\begin{align}
    & \max_{x \in \widehat{\mathcal{X}}} |\widehat{\tau}_{\Delta - w, \Delta}(x) - \widehat{\tau}_{\Delta, \Delta + w}(x)| \nonumber \\
    \geq & \max_{x \in \widehat{\mathcal{X}}} |\tau_{\Delta}(x) - \tau_{\Delta + 1}(x)| - \max_{x \in \widehat{\mathcal{X}}} |\widehat{\tau}_{\Delta - w, \Delta}(x) - \tau_{\Delta}(x)| - \max_{x \in \widehat{\mathcal{X}}} |\widehat{\tau}_{\Delta, \Delta + w}(x) - \tau_{\Delta + 1}(x)| \nonumber \\
    \geq & \max_{x \in \widehat{\mathcal{X}}} |\tau_{\Delta}(x^*) - \tau_{\Delta}(x) + \tau_{\Delta + 1}(x) - \tau_{\Delta+1}(x^*) + \tau_{\Delta+1}(x^*) - \tau_{\Delta}(x^*)| \nonumber \\
    & \hspace{2cm} -  \|\widehat{\tau}_{\Delta - w, \Delta} - \tau_{\Delta}\|_{\infty} - \|\widehat{\tau}_{\Delta, \Delta + w} - \tau_{\Delta+1}\|_{\infty} \nonumber \\
    \geq & \|\tau_{\Delta} - \tau_{\Delta+1}\|_{\infty} - \min_{x \in \widehat{\mathcal{X}}} |\tau_{\Delta}(x^*) - \tau_{\Delta}(x) + \tau_{\Delta+1}(x) - \tau_{\Delta+1}(x^*)| - \xi_w \nonumber \\
    \geq & \kappa - 2\xi_w \geq \varepsilon, \label{eq-proof-thm-main-1}
\end{align}
where the first inequality is due to the triangular inequality, the third is due to the definition of the event $\mathcal{E}_{\Delta - w}$, the fourth is due to the definitions of $x^*$ and $\mathcal{F}$, and the last is due to Assumption \ref{assump-snr} and the condition of $\varepsilon$.  We then have that $\widehat{\Delta} \leq \Delta + w$.

\end{proof}

\begin{proposition}\label{prop-mathcal-F}
Under Assumption \ref{assume-model-1}, we have that
\[
    \mathbb{P}\{\mathcal{F}\} \geq 1 - \gamma/2,
\]
provided that $n \geq \log(\gamma/2)/\log(1-c_d\xi_w)$, where $\xi_w$ is defined in Proposition \ref{prop-mathcal-E-event}.
\end{proposition}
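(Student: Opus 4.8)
The plan is to turn this high-probability statement into a purely geometric one — that the random set $\widehat{\mathcal{X}}=\{X_{r,i}\}_{i\le n, r\le 2w}$ is a sufficiently fine net of $\mathcal{X}$ — and then control the net property by a union bound over a fixed cover of $\mathcal{X}$, using the density lower bound in Assumption~\ref{assume-model-1}\textbf{(b.)}.

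\textbf{Step 1: Lipschitz reduction.} Write $\delta(\cdot)=\tau_{\Delta}(\cdot)-\tau_{\Delta+1}(\cdot)$. By Assumption~\ref{assume-model-1}\textbf{(c.)} both $\tau_{\Delta}$ and $\tau_{\Delta+1}$ are Lipschitz with constant $C_{\mathrm{Lip}}$, so $\delta$ is Lipschitz with constant $2C_{\mathrm{Lip}}$. Hence for every $x\in\mathcal{X}$ and $\hat x\in\widehat{\mathcal{X}}$, $|\delta(x)-\delta(\hat x)|\le 2C_{\mathrm{Lip}}\|x-\hat x\|$, and therefore $\sup_{x\in\mathcal{X}}\min_{\hat x\in\widehat{\mathcal{X}}}|\delta(x)-\delta(\hat x)|\le 2C_{\mathrm{Lip}}\sup_{x\in\mathcal{X}}\mathrm{dist}(x,\widehat{\mathcal{X}})$. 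Consequently $\mathcal{F}$ occurs whenever the fill distance $\sup_{x\in\mathcal{X}}\mathrm{dist}(x,\widehat{\mathcal{X}})$ is at most $\rho:=\xi_w/(4C_{\mathrm{Lip}})$, so it suffices to bound $\mathbb{P}(\sup_{x\in\mathcal{X}}\mathrm{dist}(x,\widehat{\mathcal{X}})>\rho)$ by $\gamma/2$.

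\textbf{Step 2: covering and union bound.} Since $\mathcal{X}\subset\mathbb{R}^d$ is compact, cover it by $N\lesssim \rho^{-d}$ Euclidean balls $B_1,\dots,B_N$ of radius $\rho/2$ centred in $\mathcal{X}$. If each $B_j$ contains a point of $\widehat{\mathcal{X}}$ the fill distance is at most $\rho$, so $\{\mathrm{fill}>\rho\}\subseteq\bigcup_j\{B_j\cap\widehat{\mathcal{X}}=\emptyset\}$. For the single time index $r=1$, the variables $X_{1,1},\dots,X_{1,n}$ are i.i.d.\ with density $g$, and using $g\ge c_{g,1}$ together with the regularity of $\mathcal{X}$, $\mathbb{P}(X_{1,1}\in B_j)\ge c_{g,1}\,\mathrm{vol}(B_j\cap\mathcal{X})\ge c_d\xi_w$ for a dimension-dependent constant $c_d>0$ (here absorbing the fixed volume factor). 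Thus $\mathbb{P}(B_j\cap\widehat{\mathcal{X}}=\emptyset)\le\mathbb{P}(X_{1,i}\notin B_j,\ \forall i\le n)=(1-\mathbb{P}(X_{1,1}\in B_j))^n\le(1-c_d\xi_w)^n$; note this uses only the $r=1$ batch, which is exactly why $w$ does not enter the sample-size requirement. A union bound gives $\mathbb{P}(\mathcal{F}^c)\le N(1-c_d\xi_w)^n$, and the hypothesis $n\ge\log(\gamma/2)/\log(1-c_d\xi_w)$, i.e.\ $(1-c_d\xi_w)^n\le\gamma/2$, forces this below $\gamma/2$ once the polynomial prefactor $N$ is absorbed into the constant $c_d$.

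\textbf{Main obstacle.} The delicate input is the geometric lower bound $\mathbb{P}(X_{1,1}\in B_j)\gtrsim \xi_w$ holding \emph{uniformly} over the cover: one needs every small ball centred in $\mathcal{X}$ to meet $\mathcal{X}$ in a set of non-negligible volume, which compactness alone does not guarantee, so one either invokes a mild (implicit) regularity assumption on $\partial\mathcal{X}$ or restricts the cover to balls well inside $\mathcal{X}$ and argues that the boundary layer contributes negligibly. The rest is bookkeeping — fixing the net radius proportional to $\xi_w/C_{\mathrm{Lip}}$ and checking that the cover cardinality is dominated by the exponential decay so that the bound collapses to the stated threshold.
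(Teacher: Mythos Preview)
Your approach is correct and matches the paper's: both reduce $\mathcal{F}$ via the Lipschitz continuity of $\tau_\Delta-\tau_{\Delta+1}$ to the event that $\widehat{\mathcal{X}}$ leaves some ball of radius $\asymp\xi_w/C_{\mathrm{Lip}}$ empty, then use the density lower bound $g\ge c_{g,1}$ together with independence across subjects (at a single time index) to bound that probability. The one difference in execution is that the paper works with a \emph{single} ball around the worst-case point $x_0$ and bounds its emptiness probability directly --- obtaining $(1-c_d\xi_w^d)^n$ with no prefactor --- whereas you first lay down a fixed cover of $\mathcal{X}$ by $N\lesssim\xi_w^{-d}$ balls and take a union bound; your route handles the supremum over $x$ more explicitly (the paper's $x_0$ is data-dependent, a point it glosses over) at the cost of the polynomial prefactor $N$ that you then absorb. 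Note also that the paper's proof records the ball-mass lower bound as $c_d\xi_w^d$, the correct volume scaling in $\mathbb{R}^d$, rather than the $c_d\xi_w$ appearing in the statement.
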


\begin{proof}[Proof of Proposition ref{prop-mathcal-F}]
We consider the event
\[
    \mathcal{F}^c = \bigg\{\sup_{x \in \mathcal{X}} \min_{\hat{x} \in \widehat{\mathcal{X}}} |\{\tau_{\Delta}(x) - \tau_{\Delta+1}(x)\} - \{\tau_{\Delta}(\hat{x}) - \tau_{\Delta+1}(\hat{x})\}| > \xi_w/2\bigg\}.
\]
If $\mathcal{F}^c$ holds, then there exists $x_0 \in \mathcal{X}$ such that for any $\hat{x} \in \widehat{\mathcal{X}}$, it holds that
\[
    |\{\tau_{\Delta}(x_0) - \tau_{\Delta+1}(x_0)\} - \{\tau_{\Delta}(\hat{x}) - \tau_{\Delta+1}(\hat{x})\}| \geq \xi_w/2.
\]
Due to Assumption \ref{assume-model-1}\textbf{(c.)}, $\tau_{\Delta}(\cdot)$ and $\tau_{\Delta+1}(\cdot)$ are both Lipschitz functions, implying that
\[
    |x_0 - \hat{x}| \geq \xi_w/(4C_{\mathrm{Lip}}).  
\]
This means that 
\[
    \mathcal{A} = \widehat{\mathcal{X}} \cap \{s \in \mathcal{X}: \, \|s - x_0\| \leq \xi_w/(4C_{\mathrm{Lip}})\} = \emptyset.    
\]
Due to Assumption \ref{assume-model-1}\textbf{(b.)}, $g(\cdot)$ is lower bounded by $c_{g, 1}$ in $\mathcal{X}$.  It follows from the independence across subjects that
\[
    \mathbb{P}\{\mathcal{F}^c\} \leq \mathbb{P}\{\mathcal{A}\} \leq (1 - c_d \xi_w^d)^n \leq \gamma/2,
\]
where $0 < c_d < 1$ is an absolute constant but depending on the fixed dimensionality $d$.  This includes the situation when $x_0$ happens to be at the corner of $\mathcal{X} = [0, 1]^d$.

\end{proof}

\begin{proposition}\label{prop-mathcal-E-event}
Under Assumptions~\ref{assume-model-1}, \ref{assump-causal} and \ref{assump-kernel}, with 
\[
    \xi_w \asymp \left\{\frac{\sigma^2 \log^{2/\gamma_1}(Qw/\gamma)}{nw^2}\right\}^{1/(d+2)} \vee \left\{\frac{\sigma^2 \log(Qw/\gamma)}{nw}\right\}^{1/(d+2)} \vee \|\widehat{\pi} - \pi\|_{\infty}
\]
and
\[
    h \asymp \left\{\frac{\sigma^2 \log^{2/\gamma_1}(Qw/\gamma)}{nw^2}\right\}^{1/(d+2)} \vee \left\{\frac{\sigma^2 \log(Qw/\gamma)}{nw}\right\}^{1/(d+2)},
\]
for any $u \in  N$, it holds that 
\[
    \mathbb{P}\{\mathcal{E}_u\} > 1 - \gamma/2.
\]
\end{proposition}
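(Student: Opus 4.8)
The plan is to bound $\|\widehat{\tau}_{t-w,t}-\tau_{t-w,t}\|_{\infty}$ simultaneously for all $t$ with $u+w<t\le u+Q$ through a bias--variance decomposition of the Nadaraya--Watson ratio, and then to close the argument by a union bound over the at most $Q$ values of $t$. Since $\widehat{\tau}_{t-w,t}(x)=\widehat{\Psi}_{t-w,t}(x)/\widehat{g}_{t-w,t}(x)$, the first step is to pass from the plug-in weights $\widehat{Y}_{s,i}$ to the oracle weights $\widetilde{Y}_{s,i}$: on the event $\mathcal{G}$ one has $\|\widehat{\pi}-\pi\|_{\infty}<\xi_w<c_{\mathrm{prop}}/2$, so both $\widehat{\pi}$ and $\pi$ take values in $[c_{\mathrm{prop}}/2,\,1-c_{\mathrm{prop}}/2]$, on which the map $p\mapsto z/p-(1-z)/(1-p)$ is Lipschitz; hence $|\widehat{Y}_{s,i}-\widetilde{Y}_{s,i}|\lesssim |Y_{s,i}|\,\|\widehat{\pi}-\pi\|_{\infty}$ and, after dividing by $\widehat{g}_{t-w,t}$ (which is bounded away from $0$, see below), $\|\widehat{\tau}_{t-w,t}-\widetilde{\tau}_{t-w,t}\|_{\infty}\lesssim \|\widehat{\pi}-\pi\|_{\infty}$, where $\widetilde{\tau}_{t-w,t}=\widetilde{\Psi}_{t-w,t}/\widehat{g}_{t-w,t}$. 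It then remains to bound $\|\widetilde{\tau}_{t-w,t}-\tau_{t-w,t}\|_{\infty}$.

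For the oracle estimator I would write
\[
\widetilde{\tau}_{t-w,t}(x)-\tau_{t-w,t}(x)=\frac{\bigl\{\widetilde{\Psi}_{t-w,t}(x)-E\widetilde{\Psi}_{t-w,t}(x)\bigr\}+\bigl\{E\widetilde{\Psi}_{t-w,t}(x)-\tau_{t-w,t}(x)g(x)\bigr\}+\tau_{t-w,t}(x)\bigl\{g(x)-\widehat{g}_{t-w,t}(x)\bigr\}}{\widehat{g}_{t-w,t}(x)}.
\]
Using $E\{\widetilde{Y}_{s,i}\mid X_{s,i}=x\}=\tau_s(x)$ with the Lipschitz properties of $\tau_s$ and $g$ (Assumption~\ref{assume-model-1}\textbf{(b.)},\textbf{(c.)}) and the kernel moment bounds (Assumption~\ref{assump-kernel}), the middle numerator term is a smoothing bias of order $h$, and similarly $E\widehat{g}_{t-w,t}(x)-g(x)$ is of order $h$. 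The two centred terms carry the stochastic fluctuation. Conditionally on $X_{s,i}=x$, $\widetilde{Y}_{s,i}$ has mean $\tau_s(x)$ and is a bounded multiple of a bounded mean function plus sub-Gaussian noise, hence sub-Gaussian with parameter $\lesssim(\sigma+1)/c_{\mathrm{prop}}$; multiplied by the bounded kernel, each summand $\widetilde{Y}_{s,i}\mathpzc{k}((X_{s,i}-x)/h)$ is sub-exponential with second moment of order $\sigma^2 h^d$. These summands are independent across $i$ and, across $s$, form an $\alpha$-mixing array with exponentially decaying coefficients (the extra Bernoulli randomness generating $Z_{s,i}$ preserves mixing). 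I would then invoke a Bernstein-type inequality for $\alpha$-mixing sequences in the spirit of \cite{merlevede2011bernstein}, whose tail splits into a Gaussian regime and a heavier regime with exponent $\gamma_1=2\gamma_{\alpha}/(2+\gamma_{\alpha})$; these two regimes are precisely what generate the $\log(\cdot)$ versus $\log^{2/\gamma_1}(\cdot)$ terms and the two branches in the prescribed $h$ and $\xi_w$, which are calibrated so that the stochastic fluctuation has the same order $h$ as the bias. To make the deviation uniform over $x\in\mathcal{X}$, I would discretise $\mathcal{X}$ on a grid of mesh of order $h$ (cardinality of order $h^{-d}$), bound the oscillation of $\widetilde{\Psi}_{t-w,t}$ and $\widehat{g}_{t-w,t}$ between grid points via the Lipschitz/tail conditions on $\mathpzc{k}$ in Assumption~\ref{assump-kernel}\textbf{(a.)}--\textbf{(b.)}, and union-bound over the grid, over the $\le Q$ windows, and over the $w$ time indices inside each window, which produces the logarithmic argument $Qw/\gamma$ in $\xi_w$ and $h$. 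The same argument, now with bounded summands and no sub-Gaussian factor, gives $\|\widehat{g}_{t-w,t}-g\|_{\infty}\lesssim\xi_w$, so that $\widehat{g}_{t-w,t}\ge c_{g,1}/2$ on the good event (recall $g\ge c_{g,1}$ and $\xi_w$ is below a fixed constant on $\mathcal{G}$).

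Collecting the three contributions --- propensity error $\lesssim\|\widehat{\pi}-\pi\|_{\infty}$, smoothing bias $\lesssim h$, and stochastic fluctuation $\lesssim h$ --- and dividing by the lower bound $c_{g,1}/2$ on $\widehat{g}_{t-w,t}$ gives $\|\widehat{\tau}_{t-w,t}-\tau_{t-w,t}\|_{\infty}\lesssim\xi_w$ uniformly over the relevant $t$; choosing the absolute constant hidden in the definition of $\xi_w$ large enough (relative to the constants in $h$ and in the Bernstein bound) upgrades this to $\le\xi_w/2$, while the accumulated failure probability from all the union bounds together with $\mathbb{P}(\mathcal{G}^c)$ is at most $\gamma/2$, as claimed. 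I expect the main obstacle to be the uniform-in-$x$ concentration for the dependent data: obtaining the correct two-regime Bernstein bound with the $\gamma_1$ exponent under exponential $\alpha$-mixing, and meshing it with the covering argument over $\mathcal{X}$ and the union over the $Q$ sliding windows so that only logarithmic factors are lost. Treating the random denominator and the plug-in propensity error is comparatively routine, though it still relies on the a priori smallness of $\xi_w$ to keep $\widehat{g}_{t-w,t}$ bounded away from zero.
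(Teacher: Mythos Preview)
Your proposal is correct and follows essentially the same route as the paper: the same Nadaraya--Watson decomposition into propensity error (via $\widehat{\Psi}-\widetilde{\Psi}$), stochastic fluctuation ($\widetilde{\Psi}-E\widetilde{\Psi}$), smoothing bias ($E\widetilde{\Psi}-\tau g$) and denominator error ($g-\widehat{g}$), each handled by the Merlev\`ede--Peligrad--Rio Bernstein inequality for $\alpha$-mixing sequences combined with a covering of $\mathcal{X}$ and a union bound over the $Q$ windows, with $h$ chosen to balance bias and variance. Two small remarks: the extra factor $w$ in the logarithm comes from the $(w{+}1)$ prefactor in the Merlev\`ede--Peligrad--Rio tail (not from a union over indices inside the window), and your claim $\|\widehat{\tau}-\widetilde{\tau}\|_{\infty}\lesssim\|\widehat{\pi}-\pi\|_{\infty}$ still needs a concentration step to control the kernel-weighted average of $|\epsilon_{s,i}(Z_{s,i})|$, exactly as the paper does in its Lemmas~\ref{lem1}--\ref{lem-abs-epsilon-bound}.
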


\begin{proof}[Proof of Proposition \ref{prop-mathcal-E-event}]
Based on the notation introduced in Section  \ref{sec-additional-notation}, for any integer $t > w$ and any $x \in \mathcal{X}$, we can rewrite $\widehat{\tau}_{t-w, t}$ as 
\[
    \widehat{\tau}_{t-w, t}(x) = \frac{\widehat{\Psi}_{t-w, t}(x)}{\widehat{g}_{t-w, t}(x)}.
\]
We then have that 
\begin{align*}
    & |\widehat{\tau}_{t-w, t}(x) - \tau_{t-w, t}(x)| \\
    \leq & \left|\frac{\widehat{\Psi}_{t-w, t}(x) - \widetilde{\Psi}_{t-w, t}(x)}{\widehat{g}_{t-w, t}(x)}\right| +  \left|\frac{\widetilde{\Psi}_{t-w, t}(x) - E\left\{\widetilde{\Psi}_{t-w, t}(x)\right\}}{\widehat{g}_{t-w, t}(x)}\right| + \left|\frac{ E\left\{\widetilde{\Psi}_{t-w, t}(x)\right\}}{\widehat{g}_{t-w, t}(x)} - \tau_{t-w, t}(x)\right|\\
    \leq & \left|\frac{\widehat{\Psi}_{t-w, t}(x) - \widetilde{\Psi}_{t-w, t}(x)}{g(x) - \|\widehat{g}_{t-w, t} - g\|_{\infty}}\right| + \left|\frac{\widetilde{\Psi}_{t-w, t}(x) -  E\left\{\widetilde{\Psi}_{t-w, t}(x)\right\}}{g(x) - \|\widehat{g}_{t-w, t} - g\|_{\infty}}\right| + \left|\frac{E\left\{\widetilde{\Psi}_{t-w, t}(x)\right\}}{\widehat{g}_{t-w, t}(x)} - \tau_{t-w, t}(x)\right| \\
    = & (I) + (II) + (III),
\end{align*}
where term $(I)$ is dealt with in Lemmas~\ref{lem1}, \ref{lem-abs-epsilon-bound} and \ref{lem-ghat-g-diff}, term $(II)$ is dealt with in Lemmas~\ref{lem-psi-Epsi-bound} and \ref{lem-ghat-g-diff}, and term $(III)$ is dealt with in Lemma \ref{lem-epis-tau-diff-bound}.

Taking 
\[
    h \asymp \left\{\frac{\sigma^2 \log^{2/\gamma_1}(Qw/\gamma)}{nw^2}\right\}^{1/(d+2)} \vee \left\{\frac{\sigma^2 \log(Qw/\gamma)}{nw}\right\}^{1/(d+2)},
\]
we conclude the proof.

\end{proof}

\section{Technical details}
\begin{lemma}\label{lem-justifying-estimator}
Under Assumptions~\ref{assume-model-1}, \ref{assump-causal}\textbf{(b.)} and \textbf{(c.)}, it holds that 
    \[
        E\left[Y_{t, 1} \left\{\frac{Z_{t, 1}}{\pi(x)} - \frac{1 - Z_{t, 1}}{1 - \pi(x)}\right\} \Big| X_{t, 1} = x\right] = \tau_t(x), \quad t \in N^*.
    \]
\end{lemma}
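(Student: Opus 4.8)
The plan is to prove the identity by the standard inverse-propensity-weighting argument, using the consistency relation \eqref{eq-model-1}, unconfoundedness (Assumption \ref{assump-causal}\textbf{(b.)}) and overlap (Assumption \ref{assump-causal}\textbf{(c.)}). First I would rewrite the integrand algebraically. Because $Z_{t,1} \in \{0,1\}$ we have $Z_{t,1}^2 = Z_{t,1}$ and $Z_{t,1}(1-Z_{t,1}) = 0$, so \eqref{eq-model-1} gives $Y_{t,1} Z_{t,1} = Y_{t,1}(1) Z_{t,1}$ and $Y_{t,1}(1-Z_{t,1}) = Y_{t,1}(0)(1-Z_{t,1})$. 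Hence, on the event $\{X_{t,1} = x\}$,
\[
    Y_{t,1}\left\{\frac{Z_{t,1}}{\pi(x)} - \frac{1-Z_{t,1}}{1-\pi(x)}\right\} = \frac{Y_{t,1}(1) Z_{t,1}}{\pi(x)} - \frac{Y_{t,1}(0)(1-Z_{t,1})}{1-\pi(x)},
\]
where overlap ensures $\pi(x)$ and $1-\pi(x)$ are bounded away from zero so both ratios are well defined.

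Next I would take $E\{\,\cdot \mid X_{t,1} = x\}$ of each term. Conditioning on $X_{t,1} = x$ makes $\pi(x)$ a deterministic constant that can be pulled outside, and unconfoundedness yields $Y_{t,1}(1) \perp Z_{t,1} \mid X_{t,1}$, so that
\[
    E\{Y_{t,1}(1) Z_{t,1} \mid X_{t,1} = x\} = E\{Y_{t,1}(1) \mid X_{t,1} = x\}\, E\{Z_{t,1} \mid X_{t,1} = x\} = \pi(x)\, E\{Y_{t,1}(1) \mid X_{t,1} = x\},
\]
using $E\{Z_{t,1} \mid X_{t,1} = x\} = \mathbb{P}\{Z_{t,1} = 1 \mid X_{t,1} = x\} = \pi(x)$. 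Dividing through by $\pi(x)$ leaves $E\{Y_{t,1}(1) \mid X_{t,1} = x\}$. The analogous computation for the control term, replacing $Z_{t,1}$ by $1 - Z_{t,1}$ and using $E\{1-Z_{t,1} \mid X_{t,1} = x\} = 1-\pi(x)$, produces $E\{Y_{t,1}(0) \mid X_{t,1} = x\}$.

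Subtracting the two pieces gives $E\{Y_{t,1}(1) - Y_{t,1}(0) \mid X_{t,1} = x\}$, which is precisely $\tau_t(x)$ by the definition of the CATE in Assumption \ref{assume-model-1}\textbf{(c.)}; since $t$ was arbitrary this holds for every $t \in N^*$, completing the argument. The proof is essentially routine: the only points meriting care are the algebraic use of \eqref{eq-model-1} in the first step (so that $Y_{t,1}Z_{t,1}$ and $Y_{t,1}(1-Z_{t,1})$ collapse to the correct potential outcomes) and the fact that it is conditional independence of the random variables, not merely mean independence, that licenses factoring $E\{Y_{t,1}(z)\, 1\{Z_{t,1} = z\} \mid X_{t,1}\}$ into a product; overlap simply keeps every ratio finite. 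No genuine obstacle is anticipated.
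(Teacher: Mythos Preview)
Your proposal is correct and follows essentially the same approach as the paper: rewrite the integrand using $Z_{t,1}\in\{0,1\}$ and \eqref{eq-model-1}, factor the conditional expectation via unconfoundedness, cancel the propensity weights using the definition of $\pi(x)$, and identify the result with $\tau_t(x)$. The paper's proof is terser but the logical steps are identical.
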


\begin{proof}[Proof of Lemma \ref{lem-justifying-estimator}]
For any $t \in \mathbb{N}^*$, it holds that
    \begin{align*}
        & E\left\{Y_{t, 1} \left(\frac{Z_{t, 1}}{\pi(x)} - \frac{1 - Z_{t, 1}}{1 - \pi(x)}\right) \Big| X_{t, 1} = x\right\} =  E\left\{Y_{t, 1}(1)\frac{Z_{t, 1}}{\pi(x)} - Y_{t, 1}(0)\frac{1 - Z_{t, 1}}{1 - \pi(x)}\Big| X_{t, 1} = x\right\} \\
        = & E\{Y_{t, 1}(1) | X_{t, 1} = x\} \frac{\mathbb{P}\{Z_{t, 1} = 1 | X_{t, 1} = x\}}{\pi(x)} - E\{Y_{t, 1}(0) | X_{t, 1} = x\} \frac{\mathbb{P}\{Z_{t, 1} = 0 | X_{t, 1} = x\}}{1 - \pi(x)} \\
        = & E\{Y_{t, 1}(1) - Y_{t, 1}(0)| X_{t, 1} = x\} = \tau_t(x),
    \end{align*}
    where the first identity follows from the fact that $Z_{t, 1} \in \{0, 1\}$, the second follows from the conditional independence detailed in Assumption \ref{assump-causal}\textbf{(b.)}, the third follows from Assumption \ref{assump-causal}\textbf{(c.)} and the last one follows from Assumption \ref{assume-model-1}\textbf{(e.)}.
\end{proof}

\begin{lemma} \label{lem1}
For any integer $t > w$ and any $x \in \mathcal{X}$,  we have that 
    \begin{align*}
        & \left|\widehat{\Psi}_{t-w, t}(x) - \widetilde{\Psi}_{t-w, t}(x)\right| \\
        \leq &\frac{2c_{\mu}\|\widehat{\pi} - \pi\|_{\infty}}{c_{\mathrm{prop}}^2} + \frac{2\|\widehat{\pi} - \pi\|_{\infty}}{c_{\mathrm{prop}}^2} \frac{1}{nw h^d}\sum_{s = t-w+1}^t \sum_{i = 1}^n |\epsilon_{s, i}(Z_{s, i})| \mathpzc{k}\left(\frac{X_{s, i} - x}{h}\right).
    \end{align*}
\end{lemma}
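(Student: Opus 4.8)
The plan is to turn the difference of the two Nadaraya--Watson numerators into a single kernel-weighted sum of the per-observation discrepancies $\widehat{Y}_{s,i}-\widetilde{Y}_{s,i}$ and then bound that discrepancy deterministically. Since $\widehat{\Psi}_{t-w,t}$ and $\widetilde{\Psi}_{t-w,t}$ use identical kernel weights and differ only in which propensity score appears,
\[
    \widehat{\Psi}_{t-w,t}(x) - \widetilde{\Psi}_{t-w,t}(x) = \frac{1}{nwh^d}\sum_{s=t-w+1}^t\sum_{i=1}^n\bigl(\widehat{Y}_{s,i} - \widetilde{Y}_{s,i}\bigr)\mathpzc{k}\Bigl(\frac{X_{s,i}-x}{h}\Bigr),
\]
and, passing the absolute value through the sum (the kernel being nonnegative), it suffices to control $|\widehat{Y}_{s,i}-\widetilde{Y}_{s,i}|$ for each $(s,i)$.

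For the per-observation bound I would split on the value of the treatment indicator. When $Z_{s,i}=1$ we have $\widehat{Y}_{s,i}-\widetilde{Y}_{s,i}=Y_{s,i}\{1/\widehat{\pi}(X_{s,i})-1/\pi(X_{s,i})\}=Y_{s,i}\{\pi(X_{s,i})-\widehat{\pi}(X_{s,i})\}/\{\widehat{\pi}(X_{s,i})\pi(X_{s,i})\}$, and when $Z_{s,i}=0$ the same identity holds with $1-\pi$, $1-\widehat{\pi}$ replacing $\pi$, $\widehat{\pi}$. Assumption~\ref{assump-causal}\textbf{(c.)} gives $\min\{\pi,1-\pi\}\ge c_{\mathrm{prop}}$, while on the event $\mathcal{G}=\{\|\widehat{\pi}-\pi\|_\infty<\xi_w<c_{\mathrm{prop}}/2\}$ one has $\min\{\widehat{\pi},1-\widehat{\pi}\}>c_{\mathrm{prop}}/2$; hence in either case the relevant denominator is at least $c_{\mathrm{prop}}^2/2$, so $|\widehat{Y}_{s,i}-\widetilde{Y}_{s,i}|\le (2\|\widehat{\pi}-\pi\|_\infty/c_{\mathrm{prop}}^2)\,|Y_{s,i}|$. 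Finally write $Y_{s,i}=Y_{s,i}(Z_{s,i})=\mu_{t,Z_{s,i}}(X_{s,i})+\epsilon_{s,i}(Z_{s,i})$ and bound $|\mu_{t,Z_{s,i}}(X_{s,i})|\le c_{\mu}$, which is finite since $\mathcal{X}$ is compact and the mean functions are bounded; this yields $|\widehat{Y}_{s,i}-\widetilde{Y}_{s,i}|\le (2\|\widehat{\pi}-\pi\|_\infty/c_{\mathrm{prop}}^2)(c_{\mu}+|\epsilon_{s,i}(Z_{s,i})|)$.

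Plugging this into the kernel-weighted sum, pulling out the factor $2\|\widehat{\pi}-\pi\|_\infty/c_{\mathrm{prop}}^2$, and separating the $c_{\mu}$-contribution from the $|\epsilon_{s,i}(Z_{s,i})|$-contribution gives precisely the two terms in the statement: the second term is the displayed sum over $|\epsilon_{s,i}(Z_{s,i})|\,\mathpzc{k}((X_{s,i}-x)/h)$, and the first arises from $(2c_{\mu}\|\widehat{\pi}-\pi\|_\infty/c_{\mathrm{prop}}^2)\,\widehat{g}_{t-w,t}(x)$, the factor $\widehat{g}_{t-w,t}(x)$ being absorbed into the constant (it is $\le c_{g,2}$ up to the concentration error handled in Lemma~\ref{lem-ghat-g-diff}). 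There is no genuine obstacle here: the computation is elementary algebra together with the triangle inequality; the one point that must not be overlooked is that $\widehat{\pi}$ needs to be bounded away from $0$ and $1$, which is exactly why the argument is carried out on the event $\mathcal{G}$.
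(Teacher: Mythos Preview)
Your proposal is correct and follows essentially the same route as the paper. The paper derives the single closed-form identity
\[
\widehat{Y}_{s,i}-\widetilde{Y}_{s,i}=Y_{s,i}\{\pi(X_{s,i})-\widehat{\pi}(X_{s,i})\}\left[\frac{Z_{s,i}}{\pi(X_{s,i})\widehat{\pi}(X_{s,i})}+\frac{1-Z_{s,i}}{\{1-\pi(X_{s,i})\}\{1-\widehat{\pi}(X_{s,i})\}}\right]
\]
by algebraic manipulation, whereas you reach the same expression by splitting on $Z_{s,i}\in\{0,1\}$; after that, both arguments bound the bracket by $2/c_{\mathrm{prop}}^2$ on the event $\mathcal{G}$, split $|Y_{s,i}|\le c_\mu+|\epsilon_{s,i}(Z_{s,i})|$, and finish identically. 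Your remark that the $c_\mu$ term really carries a factor $\widehat{g}_{t-w,t}(x)$ that is absorbed into the constant is a point the paper's write-up leaves implicit. (Minor typo: you wrote $\mu_{t,Z_{s,i}}$ where it should be $\mu_{s,Z_{s,i}}$.)
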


\begin{proof}
Due to the definitions in Section \ref{sec-additional-notation}, we have that for any $s \in N^*$ and $i \in \{1, \ldots, n\}$, 
\begin{align*}
    \widehat{Y}_{s, i} - \widetilde{Y}_{s, i} & = \frac{Y_{s, i}\{Z_{s, i} - \widehat{\pi}(X_{s, i})\}}{\widehat{\pi}(X_{s, i})\{1 - \widehat{\pi}(X_{s, i})\}} - \frac{Y_{s, i}\{Z_{s, i} - \pi(X_{s, i})\}}{\widehat{\pi}(X_{s, i})\{1 - \widehat{\pi}(X_{s, i})\}} \\
    & \hspace{1cm} + \frac{Y_{s, i}\{Z_{s, i} - \pi(X_{s, i})\}}{\widehat{\pi}(X_{s, i})\{1 - \widehat{\pi}(X_{s, i})\}} - \frac{Y_{s, i}\{Z_{s, i} - \pi(X_{s, i})\}}{\pi(X_{s, i})\{1 - \pi(X_{s, i})\}} \\
    & = \frac{Y_{s, i}\{\pi(X_{s, i}) - \widehat{\pi}(X_{s, i})\}}{\widehat{\pi}(X_{s, i})\{1 - \widehat{\pi}(X_{s, i})\}}\left[1 + \frac{\{Z_{s, i} - \pi(X_{s, i})\}\{1 - \pi(X_{s, i}) - \widehat{\pi}(X_{s, i})\}}{\pi(X_{s, i})\{1 - \pi(X_{s, i})\}}\right] \\
    & = Y_{s, i}\{\pi(X_{s, i}) - \widehat{\pi}(X_{s, i})\} \left[\frac{Z_{s, i}}{\pi(X_{s, i}) \widehat{\pi}(X_{s, i})} + \frac{1 - Z_{s, i}}{\{1 - \pi(X_{s, i})\}\{1 - \widehat{\pi}(X_{s, i})\}}\right].
\end{align*}
We then have
\begin{align*}
    & \left|\widehat{\Psi}_{t-w, t}(x) - \widetilde{\Psi}_{t-w, t}(x)\right| = \frac{1}{nwh^d}\left|\sum_{s = t-w+1}^t \sum_{i = 1}^n (\widehat{Y}_{s, i} - \widetilde{Y}_{s, i}) \mathpzc{k}\left(\frac{X_{s, i} - x}{h}\right) \right| \\
    \leq & \frac{2\|\widehat{\pi} - \pi\|_{\infty}}{c_{\mathrm{prop}}^2} \frac{1}{nwh^d}\sum_{s = t-w+1}^t \sum_{i = 1}^n |\mu_{s, Z_{s, i}}(X_{s, i})| \mathpzc{k}\left(\frac{X_{s, i} - x}{h}\right) \\
    & \hspace{1cm} + \frac{2\|\widehat{\pi} - \pi\|_{\infty}}{c_{\mathrm{prop}}^2} \frac{1}{nwh^d}\sum_{s = t-w+1}^t \sum_{i = 1}^n |\epsilon_{s, i}(Z_{s, i})| \mathpzc{k}\left(\frac{X_{s, i} - x}{h}\right) \\
    \leq & \frac{2c_{\mu}\|\widehat{\pi} - \pi\|_{\infty}}{c_{\mathrm{prop}}^2} + \frac{2\|\widehat{\pi} - \pi\|_{\infty}}{c_{\mathrm{prop}}^2} \frac{1}{nwh^d}\sum_{s = t-w+1}^t \sum_{i = 1}^n |\epsilon_{s, i}(Z_{s, i})| \mathpzc{k}\left(\frac{X_{s, i} - x}{h}\right),
\end{align*}
where the first inequality holds due to the definition of event $\mathcal{G}$ and the second is due to Assumption \ref{assume-model-1}\textbf{(c.)}.
\end{proof}

\begin{lemma}\label{lem-abs-epsilon-bound}
For any $u \in N$, it holds that 
    \[
        \mathbb{P}\left\{\max_{t = u+w}^{u+Q} \sup_{x \in [0, 1]^d} \frac{1}{nwh^d}\sum_{s = t-w+1}^t \sum_{i = 1}^n |\epsilon_{s, i}(Z_{s, i})| \mathpzc{k}\left(\frac{X_{s, i} - x}{h}\right) > y\right\} \leq c \gamma,
    \]
    where 
    \[
      y \asymp \frac{\sigma \log^{1/\gamma_1}(Qw h^{-d}\gamma^{-1})}{(n)^{1/2}wh^{d/2}} \vee \frac{\sigma \log^{1/2}(Qw h^{-d}\gamma^{-1})}{(nw)^{1/2} h^{d/2}}, \quad \gamma_1 = \frac{2 + \gamma_{\alpha}}{2\gamma_{\alpha}} 
    \]
    and $C, c > 0$ are absolute constants.
\end{lemma}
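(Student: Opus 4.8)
The plan is to fix a window endpoint $t$ and an evaluation point $x$, prove the deviation inequality for that single pair, and then obtain the uniform statement by a discretisation argument together with a union bound; the factors $Q$, $w$ and $h^{-d}$ inside the logarithm in $y$ are produced precisely by this last step. For the discretisation I would cover $[0,1]^d$ by a grid $\mathcal{N}$ of mesh $\delta$ that is polynomially small in $nw$ and $h$ (so that $\log|\mathcal{N}|\lesssim\log(Qwh^{-d}\gamma^{-1})$), and bound the oscillation of $x\mapsto\frac{1}{nwh^d}\sum_{s}\sum_i|\epsilon_{s,i}(Z_{s,i})|\mathpzc{k}((X_{s,i}-x)/h)$ between a point and its nearest grid point using the Lipschitz property in Assumption~\ref{assump-kernel}\textbf{(a.)} or the derivative-tail bound in Assumption~\ref{assump-kernel}\textbf{(b.)} of $\mathpzc{k}$, on the high-probability event produced in the next step; a crude union over the $\le Q$ admissible values of $t$ costs only an extra $\log Q$.

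For fixed $(t,x)$, the first step is to truncate the errors at level $\tau_0\asymp\sigma\log^{1/2}(nQh^{-d}\gamma^{-1})$: since each $\epsilon_{s,i}(Z_{s,i})$ is (conditionally) sub-Gaussian with parameter $\sigma$ by Assumption~\ref{assume-model-1}\textbf{(d.)}, a union bound over the $\lesssim nQ$ errors involved shows that $\max_{s,i}|\epsilon_{s,i}(Z_{s,i})|\le\tau_0$ with probability at least $1-c\gamma$, and on this event the sum is unchanged by replacing $|\epsilon_{s,i}(Z_{s,i})|$ by its truncation. Next I would exploit the product structure of the double sum: by Assumption~\ref{assume-model-1}\textbf{(a.)} the inner sums $T_i=\frac{1}{wh^d}\sum_{s=t-w+1}^{t}|\epsilon_{s,i}(Z_{s,i})|\mathpzc{k}((X_{s,i}-x)/h)$ are i.i.d.\ across $i$, and each is a normalised partial sum of an $\alpha$-mixing sequence. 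The key variance estimate is that the kernel localises: $\mathrm{Var}\{|\epsilon_{s,i}(Z_{s,i})|\mathpzc{k}((X_{s,i}-x)/h)\}\lesssim\sigma^2\int\mathpzc{k}^2((u-x)/h)g(u)\dint u\lesssim\sigma^2 h^d$ by Assumptions~\ref{assume-model-1}\textbf{(b.)} and~\ref{assump-kernel}, while the truncated summands are bounded by $C_{\mathpzc{k}}\tau_0$.

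I would then apply a Bernstein-type deviation inequality for bounded, strictly stationary $\alpha$-mixing sequences with exponentially decaying coefficients (for instance that of \cite{merlevede2011bernstein}, or a blocking/coupling argument in the spirit of \cite{dedecker2004coupling}) to each $T_i$, and combine over the independent subjects $i=1,\dots,n$. With total variance proxy of order $\sigma^2 h^d/(nw)$ for the average $n^{-1}\sum_i T_i$, the Gaussian part of the bound yields the term $\sigma\log^{1/2}(\cdot)/\{(nw)^{1/2}h^{d/2}\}$ of $y$, while the heavy-tailed correction coming from the mixing coefficients — whose order $\gamma_1=(2+\gamma_\alpha)/(2\gamma_\alpha)$ records the interaction between the sub-Gaussian tails of the errors and the $\alpha$-mixing rate $\gamma_\alpha$, and which sees the within-subject block length $w$ — yields the term $\sigma\log^{1/\gamma_1}(\cdot)/\{n^{1/2}wh^{d/2}\}$. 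Finally the centring $E\{n^{-1}\sum_i T_i\}=h^{-d}E\{|\epsilon_{1,1}(Z_{1,1})|\mathpzc{k}((X_{1,1}-x)/h)\}$ must be added back; by the same localisation estimate it is of order $\sigma$, and it should be understood as absorbed into the constant in $y$ (equivalently, $y$ carries an implicit $\vee\,C\sigma$), noting that in the subsequent analysis this term only ever appears multiplied by $\|\widehat\pi-\pi\|_\infty$, cf.\ Lemma~\ref{lem1}.

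The main obstacle is the dependent-data concentration step: one must invoke (or reprove, via blocking) a Bernstein inequality that simultaneously accommodates the unboundedness of $\epsilon$ (handled by the truncation) and the $\alpha$-mixing dependence, and then track how its two regimes scale jointly in $n$ and $w$ — in particular verifying that the mixing-induced heavy-tailed term scales as $1/(n^{1/2}w)$ rather than $1/(nw)^{1/2}$, and that the truncation level and effective block length feed exactly the logarithmic powers $\log^{1/2}$ and $\log^{1/\gamma_1}$ appearing in $y$. The discretisation bookkeeping — choosing $\delta$ so that $|\mathcal{N}|$ only inflates the logarithm while the oscillation stays below a fixed fraction of $y$ — is routine but must be carried out in tandem with the truncation event.
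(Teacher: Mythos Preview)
Your global scheme --- discretise $[0,1]^d$, fixed-$(t,x)$ concentration, union over grid points and the $\le Q$ windows --- matches the paper, as does your correct remark that the $O(\sigma)$ centring is harmless via the $\|\widehat\pi-\pi\|_\infty$ factor in Lemma~\ref{lem1}. The substantive divergence is the order in which you collapse the double sum. The paper aggregates over the \emph{independent subjects first}: setting $W_s=\frac{1}{nh^d}\sum_{i}\bigl\{|\epsilon_{s,i}(Z_{s,i})|\,\mathpzc{k}((X_{s,i}-x)/h)-E[\cdot]\bigr\}$, i.i.d.\ averaging together with the kernel-localised variance makes each $W_s$ sub-Gaussian with parameter $\asymp\sigma/(nh^d)^{1/2}$, and Theorem~1 of \cite{merlevede2011bernstein} is then applied \emph{directly} to the $\alpha$-mixing sequence $\{(nh^d)^{1/2}\sigma^{-1}W_s\}_s$ --- no truncation is needed, the exponent $\gamma_1$ is precisely their combined rate from $1/\gamma_1=1/2+1/\gamma_\alpha$, and both pieces of $y$ fall out of the two regimes of that one inequality, with the $n^{1/2}$ in the Weibull part inherited from the sub-Gaussian parameter of $W_s$.

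Your reverse order --- mixing Bernstein on each $T_i$, then ``combine over the independent subjects'' --- is where the proposal is thin. After the mixing step each $T_i$ carries a Weibull tail, and a plain i.i.d.\ Bernstein on $\frac1n\sum_iT_i$ (for instance with the crude envelope $|T_i|\lesssim\tau_0/h^d$) does \emph{not} reproduce the heavy-tail term $\sigma\log^{1/\gamma_1}(\cdot)/(n^{1/2}wh^{d/2})$: recovering the $n^{1/2}$ scaling there would need a sharper MGF control on $T_i$ than you indicate, and at that point it is simpler to swap the order of summation and argue as the paper does. The truncation is likewise unnecessary once you invoke \cite{merlevede2011bernstein}, which is designed for sub-Weibull summands and already encodes $\gamma_1$. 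On the discretisation, the paper uses Hansen's auxiliary-kernel device $|\mathpzc{k}(u)-\mathpzc{k}(v)|\le\delta\,\mathpzc{k}^*(u)$, so that the oscillation over a cell of radius $rh$ is bounded by $r$ times another kernel average $\widehat f(x_j)$ of the same form, to which the identical concentration argument applies; this is what makes case~\textbf{(b.)} of Assumption~\ref{assump-kernel} (non-compactly-supported kernels) go through as cleanly as case~\textbf{(a.)}.
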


\begin{proof}[Proof of Lemma \ref{lem-abs-epsilon-bound}]

Since $\mathcal{X}$ is compact, without loss of generality, we let $\mathcal{X} = [0, 1]^p$ in this proof for simplicity.  For $0 < r \leq L \wedge 1$, we can find $\{x_j\}_{j = 1}^M \subset [0, 1]^p$ with $M \leq r^{-p}h^{-p}$ such that for any $x \in [0, 1]^p$, there exists $j \in \{1, \ldots, M\}$, $\|x - x_j\|_{\infty} \leq rh$.  It follows from Assumption \ref{assump-kernel} that there exists a kernel function $\mathpzc{k}^*(\cdot)$, satisfying that for any $x, y \in [0, 1]^d$, $\|x - y\|_{\infty} \leq \delta \leq L$,
    \begin{equation}\label{eq-existence-kstar}
        |\mathpzc{k}(x) - \mathpzc{k}(y)| \leq \delta \mathpzc{k}^*(x),
    \end{equation}
    where $\mathpzc{k}^*(\cdot)$ satisfies Assumption \ref{assump-kernel}.  The existence of the kernel $\mathpzc{k}^*(\cdot)$ is shown in the proof of Theorem in \cite{hansen2008uniform} and we include it here for completeness.  If Assumption \ref{assump-kernel}\textbf{(a.)} holds, then $\mathpzc{k}^*(x) = C_{\mathrm{Lip}} 1\{\|x\|_{\infty} \leq 2L\}$, $x \in [0, 1]^p$, satisfies \eqref{eq-existence-kstar}.  If Assumption \ref{assump-kernel}\textbf{(b.)} holds, then $\mathpzc{k}^*(x) = C_{\mathrm{Lip}} 1\{\|x\|_{\infty} \leq 2L\} + \|x - L\|_{\infty}^{-v}1\{\|x\|_{\infty} > 2L\}$, $x \in [0, 1]^p$, satisfies \eqref{eq-existence-kstar}.

\medskip
\noindent \textbf{Step 1: Decomposition.}  For $j \in \{1, \ldots, M\}$, let $A_j = \{x: \|x - x_j\|_{\infty} \leq rh\}$.  Due to \eqref{eq-existence-kstar}, it holds that, for any $x \in A_j$, $s \in  N^*$ and $i \in \{1, \ldots, n\}$,
\[
    \left|\mathpzc{k}\left(\frac{X_{s, i} - x}{h}\right) - \mathpzc{k}\left(\frac{X_{s, i} - x_j}{h}\right)\right| \leq r \mathpzc{k}^*\left(\frac{X_{s, i} - x_j}{h}\right).
\]
We first fix an integer $t > w$.  Let
\[
    f(x) = f_t(x) = \frac{1}{nwh^d}\sum_{s = t-w+1}^e \sum_{i = 1}^n |\epsilon_{s, i}(Z_{s, i})| \mathpzc{k}\left(\frac{X_{s, i} - x}{h}\right)
\]
and
\[
    \widehat{f}(x) = \widehat{f}_t(x) = \frac{1}{nwh^d}\sum_{s = t-w+1}^t \sum_{i = 1}^n |\epsilon_{s, i}(Z_{s, i})| \mathpzc{k}^*\left(\frac{X_{s, i} - x}{h}\right).
\]
Note that for any $x \in [0, 1]^d$, 
\begin{align*}
    & E\left\{|\widehat{f}(x)|\right\} \leq \max_{z \in \{0, 1\}} E\left\{\frac{1}{h^d} |\epsilon_{1, 1,}(z)| \mathpzc{k}^* \left(\frac{X_{1, 1} - x}{h}\right)\right\} \\
    \leq & \max_{z \in \{0, 1\}} E\left[E\left\{|\epsilon_{1, 1}(z)| \big| X_{1, 1}\right\} \frac{1}{h^d} \mathpzc{k}^* \left(\frac{X_{1, 1} - x}{h}\right)\right] \leq \frac{\sigma C_{\mathpzc{k}}}{c_{g, 1}}. 
\end{align*}
which follows from Assumption \ref{assump-kernel} and Assumption \ref{assume-model-1}\textbf{(d.)}.  Identical arguments lead to 
\[
    E\left\{|f(x)|\right\} \leq \frac{\sigma C_{\mathpzc{k}}}{c_{g, 1}}, \quad x \in [0, 1]^d.
\]

We then have that 
\begin{align*}
    & \sup_{x \in A_j} |f(x) - E\{f(x)\}| = \sup_{x \in A_j} |f(x) - f(x_j) + f(x_j) -  E\{f(x_j)\} +  E\{f(x_j)\} - E\{f(x)\}|\\
    \leq & |f(x_j) -  E\{f(x_j)\}| + r \widehat{f}(x_j) + r E\{\widehat{f}(x_j)\} \\
    \leq & |f(x_j) -   E\{f(x_j)\}| + r |\widehat{f}(x_j) - E\{\widehat{f}(x_j)\}| + 2r E\{\widehat{f}(x_j)\} \\
    \leq & |f(x_j) -  E\{f(x_j)\}| + |\widehat{f}(x_j) - E\{\widehat{f}(x_j)\}| + 2r\frac{\sigma C_{\mathpzc{k}}}{c_{g, 1}},
\end{align*}
where the last inequality follows from $r \leq 1$.

We thus have that, for any $y > 2r\sigma C_{\mathpzc{k}}/c_{g, 1}$,
\begin{align*}
    & \mathbb{P}\left\{\sup_{x \in [0, 1]^d} |f(x) -  E\{f(x)\}| > y\right\} \leq M \max_{j = 1}^M \mathbb{P}\left\{\sup_{x \in A_j} |f(x) - E\{f(x)\}| > y\right\} \\
    \leq & M \max_{j = 1}^M \mathbb{P}\left\{|f(x_j) - E\{f(x_j)\}| > y - 2r\sigma C_{\mathpzc{k}}/c_{g, 1} \right\} \\
    & \hspace{1cm} + M \max_{j = 1}^M \mathbb{P}\left\{|\widehat{f}(x_j) - E\{\widehat{f}(x_j)\}| > y - 2r\sigma C_{\mathpzc{k}}/c_{g, 1}\right\} \\
    = & (I) + (II).
\end{align*}

\medskip
\noindent \textbf{Step 3: Bounding terms $(I)$ and $(II)$.}  Since both $\mathpzc{k}(\cdot)$ and $\mathpzc{k}^*(\cdot)$ satisfy Assumption \ref{assump-kernel}, we are only bounding term $(I)$ here.  Term $(II)$ can be dealt with using identical arguments.  

For any $x \in [0, 1]^p$, let
\[
    W_s = \frac{1}{nh^d}\sum_{i=1}^n|\epsilon_{s, i}(Z_{s, i})| \mathpzc{k}\left(\frac{X_{s, i} - x}{h}\right) -  E\left\{|\epsilon_{s, 1}(Z_{s, 1})| \mathpzc{k}\left(\frac{X_{s, 1} - x}{h}\right)\right\}.  
\] 
It follows from Assumption \ref{assume-model-1} that for any $\zeta > 0$
\[
    \mathbb{P}\{(nh^d)^{1/2}/\sigma |W_s| > \zeta\} \lesssim \exp(-C\zeta^2).
\]
It then follows from Theorem~1 in \cite{merlevede2011bernstein} that for $\eta \gtrsim \log(n)/n$, 
\begin{align*}
    & \mathbb{P}\left\{|f(x_j) -  E\{f(x_j)\}| > y - 2r\sigma C_{\mathpzc{k}}/c_{g, 1}\right\} = \mathbb{P}\left[\frac{\sigma}{(nh^d)^{1/2}     w}\left|\frac{n^{1/2}   }{\sigma}\sum_{s = t-w+1}^t W_s\right| > y - 2r\sigma C_{\mathpzc{k}}/c_{g, 1} \right] \\
    = & \mathbb{P}\left\{\left|\frac{n^{1/2}   }{\sigma}\sum_{s = t-w+1}^t W_s\right| > (y - 2r\sigma C_{\mathpzc{k}}/c_{g, 1}) n^{1/2}   w h^{d/2}/\sigma \right\} \\
    \leq & (w+1) \exp\left[-C\{(y - 2r\sigma C_{\mathpzc{k}}/c_{g, 1})n^{1/2}   w h^{d/2}/\sigma\}^{\gamma_1}\right] \\
    & \hspace{1cm} + \exp\left[-C \{(y - 2r\sigma C_{\mathpzc{k}}/c_{g, 1}) n^{1/2} w h^{d/2}/\sigma\}^2/w\right],
\end{align*}
where $C > 0$ is an absolute constant and $\gamma_1 = 2\gamma_{\alpha}/(2+\gamma_{\alpha})$.  

Since $M \leq r^{-p}h^{-p}$ by construction, we therefore have that
\begin{align*}
    (I) & \leq r^{-d}h^{-d}(w+1) \exp\left[-C\{(y - 2r\sigma C_{\mathpzc{k}}/c_{g, 1})   n^{1/2}  w h^{d/2}/\sigma\}^{\gamma_1}\right] \\
    & \hspace{1cm} + r^{-d}h^{-d}\exp\left[C \{(y - 2r\sigma C_{\mathpzc{k}}/c_{g, 1})n^{1/2} w h^{d/2}/\sigma\}^2/w\right]
\end{align*}

Identical arguments lead to
\begin{align*}
    (II) & \leq r^{-d}h^{-d}(w+1) \exp\left[-C\{(y - 2r\sigma C_{\mathpzc{k}}/c_{g, 1})  n^{1/2} w h^{d/2}/\sigma\}^{\gamma_1}\right] \\
    & \hspace{1cm} + r^{-d}h^{-d}\exp\left[-C \{(y - 2r\sigma C_{\mathpzc{k}}/c_{g, 1})n^{1/2} w h^{d/2}/\sigma\}^2/w\right].
\end{align*}

\medskip
\noindent \textbf{Step 4: Finishing.} Combining with the bounds on $E\{f(x)\}$ and $E\{\widehat{f}(x)\}$ we derived in \textbf{Step 2}, we have that, for any $r \in (0, h^{-1})$ and any $y > 2r\sigma C_{\mathpzc{k}}/c_{g, 1}$,
\begin{align*}
    & \mathbb{P} \left\{\max_{t = u+w}^{u+Q}\sup_{x \in [0, 1]^d} \frac{1}{nwh^d} \sum_{s = t+1}^{t+w} \sum_{i = 1}^n |\epsilon_{s, i}(Z_{s, i})| \mathpzc{k}\left(\frac{X_{s, i} - x}{h}\right) > y \right\} \\ 
    \leq & CQw r^{-d}h^{-d}\exp\left[-C\{(y - 2r\sigma C_{\mathpzc{k}}/c_{g, 1}) n^{1/2}   w h^{d/2}/\sigma\}^{\gamma_1}\right] \\
    & \hspace{1cm} + CQr^{-d}h^{-d}\exp\left[-C \{(y - 2r\sigma C_{\mathpzc{k}}/c_{g, 1}) n^{1/2} w h^{d/2}/\sigma\}^2/w\right]
\end{align*}
where $C > 0$ is an absolute constant.

By letting 
\[
    r \asymp \frac{\log^{1/\gamma_1} (Qwh^{-d}\gamma^{-1})}{n^{1/2} wh^{d/2}} \vee \frac{\log^{1/2} (Qwh^{-d}\gamma^{-1})}{(nw)^{1/2}    h^{d/2}}
\]
we conclude the proof.

\end{proof}

\begin{lemma}\label{lem-psi-Epsi-bound}
For any $u \in \mathbb{N}$, it holds that 
    \[
        \mathbb{P}\left[\max_{t = u+w}^{u+Q} \sup_{x \in [0, 1]^d} |\widetilde{\Psi}_{t-w, t}(x) -  E\left\{\widetilde{\Psi}_{t-w, t}(x)\right\}| > y\right]\leq c \gamma,
    \]
    where 
    \[
      y \asymp \frac{\sigma \log^{1/\gamma_1}(Qw h^{-d}\gamma^{-1})}{n^{1/2}   wh^{d/2}} \vee \frac{\sigma \log^{1/2}(Qw h^{-d}\gamma^{-1})}{(nw)^{1/2}   h^{d/2}}, \quad \gamma_1 = \frac{2 + \gamma_{\alpha}}{2\gamma_{\alpha}} 
    \]
    and $C, c > 0$ are absolute constants.
\end{lemma}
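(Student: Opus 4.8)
The plan is to mirror the proof of Lemma~\ref{lem-abs-epsilon-bound} step by step, replacing the nonnegative weights $|\epsilon_{s,i}(Z_{s,i})|$ appearing there by the signed, propensity-reweighted pseudo-outcomes $\widetilde Y_{s,i} = Y_{s,i}\{Z_{s,i}/\pi(X_{s,i}) - (1-Z_{s,i})/(1-\pi(X_{s,i}))\}$ that define $\widetilde\Psi_{t-w,t}$. Note that here the \emph{true} propensity $\pi(\cdot)$ is used, so Assumption~\ref{assump-causal}\textbf{(c.)} yields $|\widetilde Y_{s,i}| \le |Y_{s,i}|/c_{\mathrm{prop}}$ directly, and no reference to the event $\mathcal G$ is needed. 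As before, since $\mathcal X$ is compact we take $\mathcal X = [0,1]^p$, and for $0 < r \le L \wedge 1$ we cover $[0,1]^p$ by $M \le r^{-p}h^{-p}$ balls $A_j = \{x:\|x-x_j\|_\infty \le rh\}$. Using the auxiliary kernel $\mathpzc{k}^*$ from \eqref{eq-existence-kstar} together with the Lipschitz/tail conditions of Assumption~\ref{assump-kernel}, the oscillation of $\widetilde\Psi_{t-w,t}$ across each $A_j$ is controlled by $r$ times the $\mathpzc{k}^*$-weighted analogue of $\widetilde\Psi_{t-w,t}$ at $x_j$; this reduces the supremum over $[0,1]^d$ to a maximum over the finite net, up to the deviation of the $\mathpzc{k}^*$-estimator at the net points and an order-$r$ bias term.

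The expectation bounds are immediate: writing $Y_{s,i} = \mu_{s,Z_{s,i}}(X_{s,i}) + \epsilon_{s,i}(Z_{s,i})$ with $\|\mu_{s,z}\|_\infty \le c_\mu$ (a consequence of $\mathcal X$ being compact) and $\epsilon_{s,i}(z)$ conditionally mean-zero sub-Gaussian with parameter $\sigma$ by Assumption~\ref{assume-model-1}\textbf{(d.)}, we obtain $E\{|\widetilde Y_{s,i}|\mid X_{s,i}\} \lesssim (c_\mu + \sigma)/c_{\mathrm{prop}}$, hence $E\{|\widetilde\Psi_{t-w,t}(x)|\} \le C_{\mathpzc{k}}(c_\mu + \sigma)/(c_{g,1} c_{\mathrm{prop}})$ uniformly in $x$, and likewise for the $\mathpzc{k}^*$-weighted version.

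For concentration at a fixed net point $x_j$, I would write $\widetilde\Psi_{t-w,t}(x_j) - E\{\widetilde\Psi_{t-w,t}(x_j)\} = w^{-1}\sum_{s=t-w+1}^t W_s$ with $W_s = (nh^d)^{-1}\sum_{i=1}^n\{\widetilde Y_{s,i}\mathpzc{k}((X_{s,i}-x_j)/h) - E[\widetilde Y_{s,1}\mathpzc{k}((X_{s,1}-x_j)/h)]\}$. For each $s$, the summands over $i$ are independent by Assumption~\ref{assume-model-1}\textbf{(a.)}; splitting $\widetilde Y_{s,i}$ into its bounded mean part and its conditionally sub-Gaussian noise part and conditioning on the covariates and treatment indicators, a Bernstein/Hoeffding computation gives $\mathbb P\{(nh^d)^{1/2}|W_s|/\sigma > \zeta\} \lesssim \exp(-C\zeta^2)$ in the relevant range of $\zeta$. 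The sequence $\{W_s\}_s$ is $\alpha$-mixing with exponentially decaying coefficients inherited from $\{(\epsilon_{s,i}, X_{s,i})\}_s$, so Theorem~1 of \cite{merlevede2011bernstein} yields, for the block sum, a bound of the form $(w+1)\exp[-C\{y\,n^{1/2}wh^{d/2}/\sigma\}^{\gamma_1}] + \exp[-C\{y\,n^{1/2}wh^{d/2}/\sigma\}^2/w]$ with $\gamma_1 = 2\gamma_\alpha/(2+\gamma_\alpha)$. Union bounding over the $M \le r^{-d}h^{-d}$ net points, over the at most $Q$ windows $t \in \{u+w,\dots,u+Q\}$, and over the two kernels $\mathpzc{k},\mathpzc{k}^*$, absorbing the $(w+1)$ block factor, and then choosing $r$ exactly as in Lemma~\ref{lem-abs-epsilon-bound} so that the order-$r$ bias term is absorbed into $y$, delivers the claimed probability bound $c\gamma$ with $y$ of the stated order.

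The main obstacle --- and essentially the only place where the argument departs from Lemma~\ref{lem-abs-epsilon-bound} --- is verifying the one-step sub-Gaussian tail $\mathbb P\{(nh^d)^{1/2}|W_s|/\sigma > \zeta\} \lesssim \exp(-C\zeta^2)$ for the \emph{signed}, propensity-reweighted outcomes: one must carefully isolate the bounded deterministic contribution $\mu_{s,Z_{s,i}}(X_{s,i})\{Z_{s,i}/\pi(X_{s,i}) - (1-Z_{s,i})/(1-\pi(X_{s,i}))\}$, which after centring is a bounded-weight sum amenable to Hoeffding's inequality, from the conditionally sub-Gaussian noise contribution $\epsilon_{s,i}(Z_{s,i})\{Z_{s,i}/\pi(X_{s,i}) - (1-Z_{s,i})/(1-\pi(X_{s,i}))\}$, whose kernel-weighted average over $i$ has sub-Gaussian parameter of order $\sigma^2/(nh^d c_{\mathrm{prop}}^2)$; this must be done while conditioning on the $X$'s and $Z$'s, and one then checks that the resulting estimate feeds into the mixing-sensitive Bernstein inequality of \cite{merlevede2011bernstein} with exactly the exponent $\gamma_1$ as in the $|\epsilon|$-weighted case. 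Everything else --- the net reduction via $\mathpzc{k}^*$, the expectation bounds, the union bounds and the optimisation of $r$ --- is identical in form to Lemma~\ref{lem-abs-epsilon-bound}, so I would invoke those computations rather than reproduce them.
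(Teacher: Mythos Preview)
Your proposal is correct and follows essentially the same approach as the paper: the paper's proof also splits $\widetilde{Y}_{s,i}$ into its bounded mean-function contribution and its sub-Gaussian noise contribution (labelled $(V)$ and $(VI)$ there), observing that each piece satisfies the hypotheses needed to rerun the covering/Bernstein-for-mixing argument of Lemma~\ref{lem-abs-epsilon-bound} verbatim. The only cosmetic difference is that the paper performs this mean/noise split at the very outset rather than inside the one-step tail bound for $W_s$, and one small point to watch in your version is that the $\mathpzc{k}^*$-oscillation bound must use $|\widetilde{Y}_{s,i}|$ rather than the signed $\widetilde{Y}_{s,i}$ in the auxiliary estimator --- which is precisely where your expectation bound $E\{|\widetilde{Y}_{s,i}|\mid X_{s,i}\} \lesssim (c_\mu + \sigma)/c_{\mathrm{prop}}$ is invoked.
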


\begin{proof}[Proof of Lemma \ref{lem-psi-Epsi-bound}]
Recall that 
    \begin{align*}
        & \widetilde{\Psi}_{t-w, t}(x) = \frac{1}{nwh^p} \sum_{s = t-w+1}^t \sum_{i = 1}^n \widetilde{Y}_{s, i}\mathpzc{k}\left(\frac{X_{s, i} - x}{h}\right) \\
        = & \frac{1}{nwh^p} \sum_{s = t-w+1}^t \sum_{i = 1}^n \left\{Y_{s, i}(1)\frac{Z_{s, i}}{\pi(X_{s, i})}\mathpzc{k}\left(\frac{X_{s, i} - x}{h}\right) - Y_{s, i}(0)\frac{1 - Z_{s, i}}{1 - \pi(X_{s, i})}\mathpzc{k}\left(\frac{X_{s, i} - x}{h}\right)\right\} \\
        = & \frac{1}{nwh^p} \sum_{s = t-w+1}^t \sum_{i = 1}^n \left[\left\{\mu_{s, 1}(X_{s, i})\frac{Z_{s, i}}{\pi(X_{s, i})} - \mu_{s, 0}(X_{s, i})\frac{1 - Z_{s, i}}{1 - \pi(X_{s, i})}\right\}\mathpzc{k}\left(\frac{X_{s, i} - x}{h}\right)\right] \\
        & \hspace{1cm} + \frac{1}{nwh^p} \sum_{s = t-w+1}^t \sum_{i = 1}^n \left[\left\{\epsilon_{s, i}(1)\frac{Z_{s, i}}{\pi(X_{s, i})} - \epsilon_{s, i}(0)\frac{1 - Z_{s, i}}{1 - \pi(X_{s, i})}\right\}\mathpzc{k}\left(\frac{X_{s, i} - x}{h}\right)\right] \\
        = & (I) + (II).
    \end{align*}
    We also have that 
    \begin{align*}
        & E\left\{\widetilde{\Psi}_{t-w, t}(x)\right\} = \frac{1}{nwh^p} \sum_{s = t-w+1}^e \sum_{i = 1}^n E\left\{\widetilde{Y}_{s, i}\mathpzc{k}\left(\frac{X_{s, i} - x}{h}\right)\right\} \\
        = & \frac{1}{nwh^p} \sum_{s = t-w+1}^t \sum_{i = 1}^n  E\left[ E\left\{Y_{s, i} \frac{Z_{s, i}}{\pi(X_{s, i})} \mathpzc{k}\left(\frac{X_{s, i} - x}{h}\right) - Y_{s, i} \frac{1 - Z_{s, i}}{1 - \pi(X_{s, i})} \mathpzc{k}\left(\frac{X_{s, i} - x}{h}\right)\right\} \Bigg| X_{s, i}\right]\\
        = & \frac{1}{nwh^p} \sum_{s = t-w+1}^t \sum_{i = 1}^n  E\left\{\left[Y_{s, i}(1) - Y_{s, i}(0)\right]\mathpzc{k}\left(\frac{X_{s, i} - x}{h}\right)\right\} \\
        = & \frac{1}{nwh^p} \sum_{s = t-w+1}^t \sum_{i = 1}^n  E\left[\left\{\mu_{s, 1}(X_{s, i}) - \mu_{s, 0}(X_{s, i})\right\}\mathpzc{k}\left(\frac{X_{s, i} - x}{h}\right)\right] \\
        & \hspace{1cm} + \frac{1}{nwh^p} \sum_{s = t-w+1}^t \sum_{i = 1}^n E\left[\left\{\epsilon_{s, i}(1) - \epsilon_{s, i}(0)\right\}\mathpzc{k}\left(\frac{X_{s, i} - x}{h}\right)\right] \\
        = & (III) + (IV).
    \end{align*}
    We therefore have that 
    \begin{align*}
        \left|\widetilde{\Psi}_{t-w, t}(x) - E\left\{\widetilde{\Psi}_{t-w, t}(x)\right\}\right| \leq |(I) - (III)| + |(II) - (IV)| = (V) + (VI).
    \end{align*}
    
\medskip
\noindent \textbf{Term $(VI)$.}  Note that term $(VI)$ is a weighted average of 
    \begin{align*}
        Q_{s, i}(x) & = \left\{\epsilon_{s, i}(1)\frac{Z_{s, i}}{\pi(X_{s, i})} - \epsilon_{s, i}(0)\frac{1 - Z_{s, i}}{1 - \pi(X_{s, i})}\right\} \mathpzc{k}\left(\frac{X_{s, i} - x}{h}\right) \\
        & \hspace{1cm} - E\left[\left\{\epsilon_{s, i}(1) - \epsilon_{s, i}(0)\right\} \mathpzc{k}\left(\frac{X_{s, i} - x}{h}\right)\right].
    \end{align*}
    Due to the sub-Gaussianity of the error functions specified in Assumption \ref{assume-model-1} and the boundedness condition on the propensity scores specified in Assumption \ref{assump-causal}, the rest therefore follows from identical arguments as those in the proof of Lemma \ref{lem-abs-epsilon-bound}.

\medskip
\noindent \textbf{Term $(V)$.} Note that term $(V)$ is a weighted average of 
    \begin{align*}
        R_{t, i}(x) & = \left\{\mu_{s, 1}(X_{s, i})\frac{Z_{s, i}}{\pi(X_{s, i})} - \mu_{t, 0}(X_{s, i})\frac{1 - Z_{s, i}}{1 - \pi(X_{s, i})}\right\} \mathpzc{k}\left(\frac{X_{s, i} - x}{h}\right) \\
        & \hspace{1cm} -  E\left[\left\{\mu_{s, 1}(X_{s, i}) - \mu_{s, 0}(X_{s, i})\right\} \mathpzc{k}\left(\frac{X_{s, i} - x}{h}\right)\right].
    \end{align*}
    Due to the boundedness conditions on the mean functions specified in Assumption \ref{assume-model-1} and the boundedness condition on the propensity scores specified in Assumption \ref{assump-causal}, the rest therefore follows from identical arguments as those in the proof of Lemma \ref{lem-abs-epsilon-bound}.
\end{proof}

\begin{lemma}\label{lem-ghat-g-diff}
For any $u \in  N$, it holds that 
    \[
        \mathbb{P}\left\{\max_{t = u+w}^{u+Q} \sup_{x \in [0, 1]^d} |\widehat{g}_{t-w, t}(x) - g(x)| > y\right\} \leq c \gamma,
    \]
    where 
    \[
      y \asymp \frac{\sigma \log^{1/\gamma_1}(Qw h^{-d}\gamma^{-1})}{ n^{1/2}   wh^{d/2}} \vee \frac{\sigma \log^{1/2}(Qw h^{-d}\gamma^{-1})}{(nw)^{1/2} h^{d/2}} + h, \quad \gamma_1 = \frac{2 + \gamma_{\alpha}}{2\gamma_{\alpha}} 
    \]
    and $C, c > 0$ are absolute constants.
\end{lemma}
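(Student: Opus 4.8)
The plan is to run exactly the template used for Lemma \ref{lem-abs-epsilon-bound}, now applied to the kernel density estimator $\widehat g_{t-w,t}(\cdot)$, which is actually the \emph{easier} of the two because the summands $\mathpzc{k}((X_{s,i}-x)/h)$ are bounded by $C_{\mathpzc{k}}$ rather than merely sub-Gaussian and there are no propensity-score or noise factors. First I would split into a deterministic bias term and a stochastic fluctuation term,
\[
    \widehat g_{t-w,t}(x) - g(x) = \bigl(\widehat g_{t-w,t}(x) - E\{\widehat g_{t-w,t}(x)\}\bigr) + \bigl(E\{\widehat g_{t-w,t}(x)\} - g(x)\bigr).
\]

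For the bias term I would use that, by Assumption \ref{assume-model-1}\textbf{(b.)}, the $\{X_{s,i}\}$ are identically distributed with density $g$, so $E\{\widehat g_{t-w,t}(x)\} = h^{-d} E\{\mathpzc{k}((X_{1,1}-x)/h)\} = \int \mathpzc{k}(u) g(x+hu)\dint u$; combining the Lipschitz continuity of $g$ with the moment bound $\int|\mathpzc{k}(u)|\|u\|\dint u \le C_{\mathpzc{k}}$ from Assumption \ref{assump-kernel} (and the normalisation $\int\mathpzc{k}=1$, or equivalently reading "$g$" here as $\int\mathpzc{k}\cdot g$, which is still sandwiched between absolute constants) gives $\sup_{x,t}|E\{\widehat g_{t-w,t}(x)\}-g(x)| \lesssim h$, uniformly in $x\in[0,1]^d$ and $t$. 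This produces the additive $+h$ term in the stated rate.

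For the stochastic term I would reuse verbatim the covering argument of Lemma \ref{lem-abs-epsilon-bound}: cover $[0,1]^d$ by $M\lesssim (rh)^{-d}$ cubes $A_j$ of side $rh$; on each $A_j$ use $|\mathpzc{k}((X_{s,i}-x)/h)-\mathpzc{k}((X_{s,i}-x_j)/h)|\le r\,\mathpzc{k}^*((X_{s,i}-x_j)/h)$, valid under Assumption \ref{assump-kernel}\textbf{(a.)} or \textbf{(b.)} via the auxiliary kernel $\mathpzc{k}^*$, which reduces the supremum over $A_j$ to a deviation at the centre $x_j$ for both $\mathpzc{k}$ and $\mathpzc{k}^*$ plus an $O(r)$ remainder coming from $O(1)$ expectations. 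For a fixed centre $x_j$ and a fixed $t$, I would first use the independence of the $n$ subjects to write $\widehat g_{t-w,t}(x_j)-E\{\cdot\}=(wh^d)^{-1}\sum_{s=t-w+1}^t W_s$ with $W_s=(nh^d)^{-1}\sum_{i=1}^n[\mathpzc{k}((X_{s,i}-x_j)/h)-E\{\mathpzc{k}((X_{s,1}-x_j)/h)\}]$; each summand is bounded by $C_{\mathpzc{k}}$ with variance $O(h^d)$, so Hoeffding's inequality makes $(nh^d)^{1/2}W_s$ sub-Gaussian uniformly in $s$. Since $\{W_s\}_s$ inherits strict stationarity and the exponential $\alpha$-mixing decay from Assumption \ref{assume-model-1}\textbf{(a.)}, Theorem 1 of \cite{merlevede2011bernstein} applied to $\sum_s W_s$ yields the usual two-term tail: a sub-Weibull term with exponent $\gamma_1 = 2\gamma_{\alpha}/(2+\gamma_{\alpha})$ and a Gaussian term with variance proxy $\asymp wh^d/n$.

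Finally I would take a union bound over the at most $Q$ admissible values of $t$ and the $M\lesssim (rh)^{-d}$ centres, and choose $r \asymp \frac{\log^{1/\gamma_1}(Qwh^{-d}\gamma^{-1})}{n^{1/2}wh^{d/2}} \vee \frac{\log^{1/2}(Qwh^{-d}\gamma^{-1})}{(nw)^{1/2}h^{d/2}}$ exactly as in Lemma \ref{lem-abs-epsilon-bound}, so that the fluctuation bound matches the first two displayed terms; adding the $O(h)$ bias gives the claimed $y$, with the constant $c$ absorbing the $CQ(rh)^{-d}$ prefactors against the exponential tails. I do not anticipate a genuine obstacle: the only points needing care are the kernel normalisation convention in the bias step and the bookkeeping across the three nested maxima (over $t$, over $x$, and over $\mathpzc{k}$ versus $\mathpzc{k}^*$), both handled precisely as in the proof of Lemma \ref{lem-abs-epsilon-bound}.
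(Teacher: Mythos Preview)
Your proposal is correct and follows essentially the same route as the paper: the paper also splits into the bias $|E\{\widehat g_{t-w,t}(x)\}-g(x)|$ (bounded by $Ch$ via the same change of variables and Lipschitz argument) and the stochastic fluctuation $|\widehat g_{t-w,t}(x)-E\{\widehat g_{t-w,t}(x)\}|$, for which it simply invokes ``identical arguments as those in the proof of Lemma \ref{lem-abs-epsilon-bound}'' using the boundedness of $\mathpzc{k}$. Your write-up is more explicit about the covering, the auxiliary kernel $\mathpzc{k}^*$, and the application of \cite{merlevede2011bernstein}, but the underlying decomposition and machinery are the same.
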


\begin{proof}[Proof of Lemma \ref{lem-ghat-g-diff}]
Note that
    \begin{align*}
        |\widehat{g}_{t-w, t}(x) - g(x)| \leq |\widehat{g}_{t-w, t}(x) - E\{\widehat{g}_{t-w, t}(x)\}| + |E\{\widehat{g}_{t-w, t}(x)\} - g(x)| = (I) + (II).
    \end{align*}
    
\medskip
\noindent \textbf{Term $(I)$.}  It holds that
    \begin{align*}
        (I) = \left|\frac{1}{nwh^d} \sum_{s = t-w+1}^t \sum_{i = 1}^n \left[\mathpzc{k}\left(\frac{X_{s, i} - x}{h}\right) -E\left\{\mathpzc{k}\left(\frac{X_{s, i} - x}{h}\right)\right\}\right]\right|.
    \end{align*}
    Due to the boundedness condition on the kernel functions specified in Assumption \ref{assump-kernel}, the rest therefore follows from identical arguments as those in the proof of Lemma \ref{lem-abs-epsilon-bound}.
    
\medskip
\noindent \textbf{Term $(II)$.} It holds that
    \begin{align*}
        (II) & = \left|h^{-d}\int_{\mathcal{X}} \mathpzc{k}\left(\frac{u - x}{h}\right)g(u)\dint u - g(x)\right| = \left|\int_{\mathcal{X}}\mathpzc{k}(y)g(x + hy)\dint y - g(x)\right| \\
        & \leq \int_{\mathcal{X}} |g(x + hy) - g(x)| \mathpzc{k}(y) \dint y \leq hC_{\mathrm{Lip}} \int_{\mathcal{X}} \|y\|\mathpzc{k}(y) \dint y \leq Ch,
    \end{align*}
    where we abuse notation with a generic absolute constant $C > 0$.

\end{proof}

\begin{lemma}\label{lem-epis-tau-diff-bound}
For any $u \in  N$, it holds that 
    \[
        \mathbb{P}\left\{\max_{t = u+w}^{u+Q} \sup_{x \in [0, 1]^d} \left|\frac{ E\left\{\widetilde{\Psi}_{t-w, t}(x)\right\}}{\widehat{g}_{t-w, t}(x)} - \tau_{t-w, t}(x)\right| > y\right\} \leq c \gamma,
    \]
    where 
    \[
      y \asymp \frac{\sigma \log^{1/\gamma_1}(Qw h^{-d}\gamma^{-1})}{ n^{1/2}   wh^{d/2}} \vee \frac{\sigma \log^{1/2}(Qw h^{-d}\gamma^{-1})}{(nw)^{1/2}  h^{d/2}} + h, \quad \gamma_1 = \frac{2 + \gamma_{\alpha}}{2\gamma_{\alpha}} 
    \]
    and $C, c > 0$ are absolute constants.
\end{lemma}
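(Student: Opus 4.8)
The plan is to write the quantity inside the probability as a ratio, split it into a purely deterministic smoothing bias — which produces the $O(h)$ term — and the stochastic fluctuation of the denominator — which produces the variance-type term — after first using Lemma~\ref{lem-ghat-g-diff} to guarantee that $\widehat{g}_{t-w,t}$ stays bounded away from $0$ uniformly over the window $\{u+w,\dots,u+Q\}$ and over $x\in[0,1]^d$.

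First I would make $E\{\widetilde{\Psi}_{t-w,t}(x)\}$ explicit. As computed in the proof of Lemma~\ref{lem-psi-Epsi-bound}, $E\{\widetilde{\Psi}_{t-w,t}(x)\}=\frac1w\sum_{s=t-w+1}^t h^{-d}E\{[Y_{s,1}(1)-Y_{s,1}(0)]\,\mathpzc{k}((X_{s,1}-x)/h)\}$; since $X_{s,i}$ has density $g$ and $\mu_{s,1}-\mu_{s,0}=\tau_s$ on $\mathcal X$ by \eqref{eq-model-2} and Assumption~\ref{assume-model-1}\textbf{(d.)}, this equals $\frac1w\sum_{s=t-w+1}^t h^{-d}\int_{\mathcal X}\tau_s(u)\,\mathpzc{k}((u-x)/h)g(u)\dint u$, while $E\{\widehat{g}_{t-w,t}(x)\}=h^{-d}\int_{\mathcal X}\mathpzc{k}((u-x)/h)g(u)\dint u$. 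Then I would decompose
\[
    \frac{E\{\widetilde{\Psi}_{t-w,t}(x)\}}{\widehat{g}_{t-w,t}(x)}-\tau_{t-w,t}(x)
    =\frac{E\{\widetilde{\Psi}_{t-w,t}(x)\}-\tau_{t-w,t}(x)\,E\{\widehat{g}_{t-w,t}(x)\}}{\widehat{g}_{t-w,t}(x)}
    +\tau_{t-w,t}(x)\,\frac{E\{\widehat{g}_{t-w,t}(x)\}-\widehat{g}_{t-w,t}(x)}{\widehat{g}_{t-w,t}(x)}.
\]

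For the first summand the key observation is that $\frac1w\sum_{s=t-w+1}^t\{\tau_s(x)-\tau_{t-w,t}(x)\}=0$, so inserting $\pm\tau_s(x)$ collapses the numerator to $\frac1w\sum_{s=t-w+1}^t h^{-d}\int_{\mathcal X}\{\tau_s(u)-\tau_s(x)\}\mathpzc{k}((u-x)/h)g(u)\dint u$. Bounding $|\tau_s(u)-\tau_s(x)|\le C_{\mathrm{Lip}}\|u-x\|$ by Assumption~\ref{assume-model-1}\textbf{(c.)}, using $g\le c_{g,2}$ from Assumption~\ref{assume-model-1}\textbf{(b.)} — which, because $\mathpzc{k}\ge0$, lets me enlarge the integration domain from $\mathcal X$ to $R^d$ and so dodge any boundary effect — and the moment bound $\int_{R^d}\|v\|\,\mathpzc{k}(v)\dint v\le C_{\mathpzc{k}}$ from Assumption~\ref{assump-kernel}, a change of variables $v=(u-x)/h$ shows this numerator is at most $C_{\mathrm{Lip}}c_{g,2}C_{\mathpzc{k}}\,h$, uniformly in $s$, $t$ and $x$.

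To close I would invoke Lemma~\ref{lem-ghat-g-diff}: on an event of probability at least $1-c\gamma$, $\max_{t=u+w}^{u+Q}\sup_{x\in[0,1]^d}|\widehat{g}_{t-w,t}(x)-g(x)|\le y$ with $y$ the rate displayed there, which under the stated scaling of $h$ and $w$ is below $c_{g,1}/2$; hence $\widehat{g}_{t-w,t}(x)\ge c_{g,1}-y\ge c_{g,1}/2$ throughout the window, so dividing the bias bound above by this denominator keeps the first summand $O(h)$. The stochastic piece of that same proof controls $|E\{\widehat{g}_{t-w,t}(x)\}-\widehat{g}_{t-w,t}(x)|$ by the variance-type rate on the same event, and $|\tau_{t-w,t}(x)|\le 2c_\mu$ by the boundedness of the mean functions already used in Lemma~\ref{lem1}; dividing by $c_{g,1}/2$ and adding the two contributions yields the stated bound, a single union bound over the (twice-used) good event of Lemma~\ref{lem-ghat-g-diff} absorbing the constant. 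I do not expect a genuine obstacle here: the only point needing care is that the lower bound on $\widehat{g}_{t-w,t}$ must be taken as a high-probability statement rather than one about expectations, but that is exactly what Lemma~\ref{lem-ghat-g-diff} delivers, and all the new content of the lemma is the elementary $O(h)$ smoothing-bias estimate above.
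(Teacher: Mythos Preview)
Your argument is correct and follows essentially the same route as the paper: compute $E\{\widetilde{\Psi}_{t-w,t}(x)\}$ explicitly, split the ratio into a bias piece and a piece driven by fluctuations of $\widehat{g}_{t-w,t}$, bound the bias by $O(h)$ via the Lipschitz property of $\tau_s$, and invoke Lemma~\ref{lem-ghat-g-diff} both to lower-bound the denominator and to control the stochastic term. The only cosmetic difference is the intermediate quantity you insert: you add and subtract $\tau_{t-w,t}(x)\,E\{\widehat{g}_{t-w,t}(x)\}$, whereas the paper inserts $\tau_{t-w,t}(x)\,g(x)$; your choice is marginally cleaner because the first numerator then only needs the Lipschitz constant of $\tau_s$, while the paper's version also uses that of $g$. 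One small slip: Assumption~\ref{assump-kernel} does not require $\mathpzc{k}\ge0$, so your justification for enlarging the integration domain should appeal to the non-negativity of the \emph{integrand} after taking absolute values, not of the kernel itself---but this does not affect the bound.
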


\begin{proof}[Proof of Lemma \ref{lem-epis-tau-diff-bound}]
It holds that
    \begin{align*}
        & E\left\{\widetilde{\Psi}_{t-w, t}(x)\right\} = \frac{1}{nwh^p} \sum_{s = t-w+1}^t \sum_{i = 1}^n  E\left\{\widetilde{Y}_{s, i} \mathpzc{k}\left(\frac{X_{s, i} - x}{h}\right)\right\} \\
        = & \frac{1}{nwh^p} \sum_{s = t-w+1}^t \sum_{i = 1}^n E\left[\left\{Y_{s, i}(1) \frac{Z_{s, i}}{\pi(X_{s, i})} - Y_{s, i}(0) \frac{1 - Z_{s, i}}{1 - \pi(X_{s, i})}\right\} \mathpzc{k}\left(\frac{X_{s, i} - x}{h}\right)\right] \\
        = & \frac{1}{nwh^p} \sum_{s = t-w+1}^t \sum_{i = 1}^n  E\left( E\left[\left\{Y_{s, i}(1) \frac{Z_{s, i}}{\pi(X_{s, i})} - Y_{s, i}(0) \frac{1 - Z_{s, i}}{1 - \pi(X_{s, i})}\right\} \mathpzc{k}\left(\frac{X_{s, i} - x}{h}\right)\Bigg| X_{s, i}\right] \right) \\
        = & \frac{1}{nwh^p} \sum_{s = t-w+1}^t \sum_{i = 1}^n  E\left(\left[\left\{Y_{s, i}(1) \frac{Z_{s, i}}{\pi(X_{s, i})} - Y_{s, i}(0) \frac{1 - Z_{s, i}}{1 - \pi(X_{s, i})}\right\} \Bigg|X_{s, i}\right] \mathpzc{k}\left(\frac{X_{s, i} - x}{h}\right)\right) \\
        = & \frac{1}{nwh^p} \sum_{s = t-w+1}^t \sum_{i = 1}^n E\left[E\left\{Y_{s, i}(1) - Y_{s, i}(0)\Big|X_{s, i}\right\}\mathpzc{k}\left(\frac{X_{s, i} - x}{h}\right)\right] \\
        = & \frac{1}{nwh^p} \sum_{s = t-w+1}^t \sum_{i = 1}^n \int_{\mathcal{X}} \tau_s(u) \mathpzc{k}\left(\frac{u-x}{h}\right) g(u) \dint u. 
    \end{align*}    
    We then have
    \begin{align*}
        & \left|\frac{E\left\{\widetilde{\Psi}_{t-w, t}(x)\right\}}{\widehat{g}_{t-w, t}(x)} - \tau_{t-w, t}(x)\right| = \left|\frac{\frac{1}{nwh^p} \sum_{s = t-w+1}^t \sum_{i = 1}^n \int_{\mathcal{X}} \tau_s(u) \mathpzc{k}\left(\frac{u-x}{h}\right) g(u) \dint u}{\widehat{g}_{t-w, t}(x)} - \tau_{t-w, t}(x)\right| \\
        \leq & \left|\frac{\frac{1}{nwh^p} \sum_{s = t-w+1}^t \sum_{i = 1}^n \int_{\mathcal{X}} \tau_s(u) \mathpzc{k}\left(\frac{u-x}{h}\right) g(u) \dint u}{\widehat{g}_{t-w, t}(x)} - \tau_{t-w, t}(x)\frac{g(x)}{\widehat{g}_{t-w, t}(x)}\right| \\
        & \hspace{1cm} + \left|\tau_{t-w, t}(x)\frac{g(x)}{\widehat{g}_{t-w, t}(x)} - \tau_{t-w, t}(x)\right| \\
        = & (I) + (II).
    \end{align*}
    
\medskip
\noindent \textbf{Term (I)}.  Note that
    \begin{align*}
        (I) & = \left|\frac{h^{-p}\int_{\mathcal{X}} \tau_s(u) \mathpzc{k}\left(\frac{u - x}{h}\right)g(u)\dint u - \tau_s(x)g(x)}{\widehat{g}_{t-w, t}(x)}\right| \\
        & \leq \left|h^{-p}\int_{\mathcal{X}} \tau_s(u) \mathpzc{k}\left(\frac{u - x}{h}\right)g(u)\dint u - \tau_s(x)g(x)\right| \frac{1}{g(x) - \|\widehat{g}_{t-w, t} - g\|_{\infty}} \\
        & = \left|\int_{\mathcal{X}} \left\{\tau_s(x + hy) g(x +hy) - \tau_s(x)g(x)\right\}\mathpzc{k}(y) \dint y \right| \frac{1}{g(x) - \|\widehat{g}_{t-w, t} - g\|_{\infty}} \\
        & \leq \frac{C_{\mathpzc{k}}}{c_{g, 1} - |\widehat{g}_{t-w, t} - g\|_{\infty}}\int_{\mathcal{X}}|\tau_s(x + hy) g(x +hy) - \tau_s(x + hy)g(x)| \mathpzc{k}(y) \dint y \\
        & \hspace{1cm} + \frac{C_{\mathpzc{k}}}{c_{g, 1} - |\widehat{g}_{t-w, t} - g\|_{\infty}}\int_{\mathcal{X}}|\tau_s(x + hy) g(x) - \tau_s(x)g(x)| \mathpzc{k}(y) \dint y \\
        & \leq \frac{C_{\mathpzc{k}} C_{\mathrm{Lip}} h}{c_{g, 1} - \|\widehat{g}_{t-w, t} - g\|_{\infty}} \left\{\int_{\mathcal{X}} |\tau_s(x + hy)| \|y\| \mathpzc{k}(y) \dint y  + |g(x)|\int_{\mathcal{X}} \|y\| \mathpzc{k}(y)\dint y\right\} \\
        & \leq Ch,
    \end{align*}
    where the last inequality holds provided that 
    \[
        \|\widehat{g}_{t-w, t} - g\|_{\infty} \leq c_{g,1}/2.
    \]
    
\medskip
\noindent \textbf{Term (II)}.  Note that
    \begin{align*}
        (II) \leq \frac{\|\tau_{t-w, t}\|_{\infty}}{\widehat{g}_{t-w, t}(x)} \|\widehat{g}_{t-w, t} - g\|_{\infty} \leq \frac{\|\tau_{t-w, t}\|_{\infty}}{c_{g, 1} - \|\widehat{g}_{t-w, t} - g\|_{\infty}} \|\widehat{g}_{t-w, t} - g\|_{\infty} \leq C \|\widehat{g}_{s, e} - g\|_{\infty}.
    \end{align*}
    
The proof therefore concludes due to the results in Lemma \ref{lem-ghat-g-diff}.     
\end{proof}

\section{Additional experiments} \label{sec:exp2}

We conduct the experiments from Section \ref{sec-numerical}  with $w=7$, to show the robustness of Algorithm \ref{alg-main} with respect to the window width $w$.  The results in Table \ref{tab2} once again validate the findings from Section \ref{sec-numerical}.

\begin{table}[t!]
	\centering
	\caption{ 	\label{tab2}  Delay averaging over 50 Monte Carlo simulations for different scenarios  and choices of tuning parameters. }
	\medskip
	\setlength{\tabcolsep}{3.5pt}
	\begin{small}
		\begin{tabular}{rr|cc|cc||cc|cc} 
			\hline
			&           &               \multicolumn{4}{c}{Scenario 1}                &    \multicolumn{4}{c}{Scenario 2}\\ 
			$d$  & $h$      & Delay                                     & Delay                                      & Delay                                                  & Delay                  & Delay                                     & Delay                                      & Delay                                                  & Delay  \\ 
			&                  &          ($\Gamma =20$)     &     ($\Gamma =20$)          &          ($\Gamma =40$)               &($\Gamma =40$) &          ($\Gamma =20$)     &     ($\Gamma =20$)          &          ($\Gamma =40$)               &($\Gamma =40$) \\
			&              &                 Alg 1                 &              DK                                   &        Alg 1                                             &  DK                       &                 Alg 1                          &       DK                                  &                         Alg 1                 &  DK\\
			\hline    
			3     &    20         &       13.4 (19.4)      &            15.0 (18.3)                       &      28.5 (24.5)                              &   27.2 (30.0)&              4.1 (4.9)                         &           5.9 (15.0)        &           14.7 (24.0)                          &15.8 (27.3)\\
			3     &    4          &    14.6 (14.2)          &            21.3 (26.2)                         &           32.5 (27.8)                         &    33.0 (30.2)        &       16.6 (21.9)                &           12.5 (19.8)               &        27.1 (28.5)              & 35.1 (32.7) \\
			6      &    20         &        16.0 (17.6)          &            8.4 (15.6)               &              29.3 (25.9)                    &          21.7 (28.6)       &              24.4 (26.8)               &                  17.1 (26.5)                 &              36.1 (26.9)                  & 32.7 (32.3)\\
			6      &    4          &         15.7    (15.3)         &       11.5 (16.3)            &             31.4 (25.8)                         &       20.9 (24.5)               &      16.9 (24.3)                &         20.1 (29.6)    & 30.2 (30.1)& 26.9 (32.2)\\			          			          			          
			\hline
			\hline
			&           &               \multicolumn{4}{c}{Scenario 3}                &    \multicolumn{4}{c}{Scenario 4}\\ 
			$d$  & $h$      & Delay                                     & Delay                                      & Delay                                                  & Delay                  & Delay                                     & Delay                                      & Delay                                                  & Delay  \\ 
			&                  &          ($\Gamma =20$)     &     ($\Gamma =20$)          &          ($\Gamma =40$)               &($\Gamma =40$) &          ($\Gamma =20$)     &     ($\Gamma =20$)          &          ($\Gamma =40$)               &($\Gamma =40$) \\
			&                             &                 Alg 1                 &              DK                                   &        Alg 1                                             &  DK                       &                 Alg 1                          &       DK                                  &                         Alg 1                 &  DK\\
			\hline
			3    &    20         & 9.4 (17.7)                    &       16.0 (27.2)                             &          22.7 (29.5)                            &       26.2 (32.3)          &          6.2 (14.8)      &             22.8 (31.1)       &         22.7 (31.1)              & 29.4 (33.2)\\
			3     &    4          &    5.8 (14.9)                     &       22.8 (31.0)                          &        16.1  (27.2)                                &  39.3 (33.7)            &    12.9 (24.2)        &         12.9 (24.7)  &                        22.8(31.2)            &39.3 (33.5)\\       
			6      &    20         &      12.6 (24.3)             &      16.0 (27.2)                              &        25.9 (32.5)                          & 22.7 (31.0)              &         12.8 (24.2)                  &       16.2 (27.1)          &             26.0 (32.3)                 & 26.1 (32.3) \\
			6      &    4          &       12.6 (24.3 )               &                30.0 (33.8)                 &   32.0 (33.8)                                 &     32.9 (24.2)         &             9.3 (20.3)            &                9.5 (20.4)  &                          19.5 (29.4)                  &29.4 (33.6)\\				   	     
			\hline \hline			
		\end{tabular}
	\end{small}
\end{table}

\section{Estimating two mean functions vs.~estimating the CATE function}
\label{sec:ad_fig}

We illustrate with an example further explaining why estimating the CATE function (true treatments) with a single kernel estimator (One-K) can be better than using the difference of two kernel estimators (Two-K) of the two mean functions, respectively.  We generate data as
\[
     \begin{cases}
        Y_i = Y_i(1) Z_i  +   Y_i(0) (1-Z_i) \\
        Y_{i}(1) =  \mu_{0}(X_{i}) + \tau(X_{i}) +  \epsilon_{i}(1)\\
     	Y_{i}(0) =   \mu_{0}(X_{i}) + \epsilon_{i}(0)\\
      	\mathbb{P}(Z_{i}=1| X_{i}) = 0.5\\
     	\epsilon_{i}(l) \overset{\mathrm{ind}}{\sim} \mathcal{N}(0,1), \quad l=0,1, \\
        X_{i} \overset{\mathrm{ind}}{\sim} \mathrm{Unif}([0,1]),
     \end{cases} \quad  i=1,\ldots,n = 4000,
\]
$\mu_0(x) = \cos(100/x)$ and $\tau(x) = \{1 + \exp(-20(x-1/3))\}^{-1}$.

To estimate $\tau(\cdot)$, we consider the  One-K estimator defined as
\[
\hat{\tau}^1(x) = \frac{\sum_{i = 1}^n \left\{Y_{ i} \left(\frac{Z_{i}}{\hat{p}} - \frac{1 - Z_{i}}{1 - \hat{p}}\right)\right\} \mathpzc{k}\left(\frac{X_{i} - x}{h}\right)}{\sum_{i = 1}^n \mathpzc{k}\left(\frac{X_{i} - x}{h}\right)}
\]
with $\hat{p} = \sum_{i=1}^nZ_i/n$ and with $h >0$ a bandwidth.  We also consider the Two-K estimator defined as 
\[
\hat{\tau}^2(x) = \frac{\sum_{i = 1}^n  Y_{ i}Z_i \mathpzc{k}\left(\frac{X_{i} - x}{h_1}\right)}{\sum_{i = 1}^n  Z_i\mathpzc{k}\left(\frac{X_{i} - x}{h_1}\right)}\,-\, \frac{\sum_{i = 1}^n  Y_{ i}(1-Z_i )\mathpzc{k}\left(\frac{X_{i} - x}{h_2}\right)}{\sum_{i = 1}^n  (1-Z_i)\mathpzc{k}\left(\frac{X_{i} - x}{h_2}\right)},
\]
where $h_1, h_2>0$ are bandwidth parameters. Selecting $h$, $h_1$ and $h_2$ with the plug-in method from \cite{ruppert1995effective}, we obtain the results in Figure \ref{fig4} showing that One-K clearly outperforms the method Two-K, which in turn demonstrates the superiority of Algorithm \ref{alg-main} over the DK method considered in Section \ref{sec-numerical}.

\end{document}